	\def\docclass{my-siam}
	\def\version{arxiv}
	\def\draftmode{false} 
\newcommand\iflipics[2]{\ifthenelse{\equal{\docclass}{lipics}}{#1}{#2}}
\newcommand\ifkoma[2]{\ifthenelse{\equal{\docclass}{koma}}{#1}{#2}}
\newcommand\ifieee[2]{\ifthenelse{\equal{\docclass}{ieee}}{#1}{#2}}
\newcommand\ifsiam[2]{\ifthenelse{\equal{\docclass}{siam}}{#1}{#2}}
\newcommand\ifmysiam[2]{\ifthenelse{\equal{\docclass}{my-siam}}{#1}{#2}}
\newcommand\ifacm[2]{\ifthenelse{\equal{\docclass}{acm}}{#1}{#2}}
\newcommand\ifmanuscript[2]{\ifthenelse{\equal{\version}{manuscript}}{#1}{#2}}
\newcommand\ifarxiv[2]{\ifthenelse{\equal{\version}{arxiv}}{#1}{#2}}
\newcommand\ifsubmission[2]{\ifthenelse{\equal{\version}{submission}}{#1}{#2}}
\newcommand\ifproceedings[2]{\ifthenelse{\equal{\version}{proceedings}}{#1}{#2}}
	\equal{\version}{manuscript} 
	\OR \equal{\version}{arxiv} 
	\OR \equal{\version}{submission} 
	\OR \equal{\version}{proceedings} 
\newcommand\ifdraft[2]{\ifthenelse{\equal{\draftmode}{true}}{#1}{#2}}
\newdimen\makeboxdimen
\newcommand\makeboxlike[3][l]{%
\setbox0=\hbox{#2}%
\global\makeboxdimen=\wd0%
\setbox1=\hbox{\makebox[\makeboxdimen][#1]{%
\makebox[0pt][#1]{#3}%
}}%
\ht1=\ht0%
\dp1=\dp0%
\box1%
}
\newcommand\like[3][c]{%
	\mathchoice{
		\makeboxlike[#1]{%
			\ensuremath{\displaystyle\relax#2}%
		}{%
			\ensuremath{\displaystyle\relax#3}%
		}%
	}{
		\makeboxlike[#1]{%
			\ensuremath{\textstyle\relax#2}%
		}{%
			\ensuremath{\textstyle\relax#3}%
		}%
	}{
		\makeboxlike[#1]{%
			\ensuremath{\scriptstyle\relax#2}%
		}{%
			\ensuremath{\scriptstyle\relax#3}%
		}%
	}{
		\makeboxlike[#1]{%
			\ensuremath{\scriptscriptstyle\relax#2}%
		}{%
			\ensuremath{\scriptscriptstyle\relax#3}%
		}%
	}
}
	\let\mytitle\@title%
\let\oldthebibliography\thebibliography
\renewcommand\thebibliography[1]{%
	\oldthebibliography{#1}%
	\pdfbookmark[1]{References}{}%
}
\ttfamily\itshape{},
\ttfamily\slshape{},
\ttfamily\tiny{},
	\newtheorem{fact}[theorem]{Fact}
	\newcommand\thmemph[1]{\emph{#1}}
	\newenvironment{proofof}[1]{%
		\begin{proof}[{{Proof of #1{}}}]%
	}{%
		\end{proof}%
	}
		\theoremstyle{acmdefinition}
		\newtheorem{remark}[theorem]{Remark}
		\newtheorem{fact}[theorem]{Fact}
	\newcommand\thmemph[1]{\emph{#1}}
	\newenvironment{proofof}[1]{%
		\begin{proof}[{{Proof of #1{}}}]%
	}{%
		\end{proof}%
	}
	\newtheorem{remark}{Remark}
	\newenvironment{proofof}[1]{%
			\begin{proof}[{{#1{}}}]%
		}{%
			\end{proof}%
		}
	\newtheoremstyle{proofstyle}%
	  {\item[\theorem@headerfont\hskip\labelsep ##1\theorem@separator]}%
	  {\item[\theorem@headerfont\hskip\labelsep ##3\theorem@separator]}
	\newtheorem{theorem}{Theorem}[section]
	\theoremstyle{plain}
	\newtheorem{proposition}[theorem]{Proposition}
	\newtheorem{definition}[theorem]{Definition}
	\theoremstyle{plain}
	\newtheorem{fact}[theorem]{Fact}
	\newtheorem{remark}[theorem]{Remark}
	\theoremstyle{proofstyle}
	\newtheorem{proof}{Proof}
	\newcommand\thmemph[1]{\textit{#1}}
	\newenvironment{thmenumerate}[2][]{%
		\begin{enumerate}[
			label={\textsf{\textbf{\color{darkgray}{\makebox[\widthof{(a)}][c]{\textup{(\alph*)}}}}}},
			ref={\ref{#2}\kern.1em--\kern.1em(\alph*)},
			itemsep=0pt,
			topsep=.5ex,
			leftmargin=1.75em,
			#1
		]%
	}{%
		\end{enumerate}%
	}
\newcommand*\ie{\mbox{i.\hspace{.2ex}e.}}
\newcommand*\eg{\mbox{e.\hspace{.2ex}g.}}
\newcommand\N{\mathbb N}
\newcommand{\ESymbol}{\mathbb{E}}
\newcommand{\ProbSymbol}{\ensuremath{\mathbb{P}}}
\DeclarePairedDelimiterXPP\Prob[1]{\ProbSymbol}[]{}{%
	#1%
}
\DeclarePairedDelimiterXPP\E[1]{\ESymbol}[]{}{%
	#1%
}
\DeclarePairedDelimiterXPP\Eover[2]{\ESymbol_{#1}}[]{}{%
	#2%
}
\DeclarePairedDelimiterXPP\ProbIn[2]{\ProbSymbol_{#1}}[]{}{%
	#2%
}
\providecommand{\Prob}{} 
\providecommand{\ProbIn}{} 
\providecommand{\E}{} 
\providecommand{\Eover}{} 
\newcommand{\surroundedmath}[3]{
	\mathchoice{
		#1{#2{#3}#2}%
	}{
		#1{#3}%
	}{
		#1{#3}%
	}{
		#1{#3}%
	}%
}
\newcommand\rel[1]{\surroundedmath{\mathrel}{\:}{#1}}
	\let\oldalign\align
	\let\endoldalign\endalign
\newcommand*\numberthis[1][]{\stepcounter{equation}\tag{\theequation}}
\newcommand\splitaftercomma[1]{%
  \begingroup
  \begingroup\lccode`~=`, \lowercase{\endgroup
    \edef~{\mathchar\the\mathcode`, \penalty0 \noexpand\hspace{0pt plus .25em}}%
  }\mathcode`,="8000 #1%
  \endgroup
}
\def\mydots{\xleaders\hbox to.5em{\hfill.\hfill}\hfill}
\newlength\tmpLenNotations
	\definecolor{refkey}{gray}{.99}
	\colorlet{labelkey}{green!60!black!60}
	\ifmanuscript{\hideLIPIcs}{}
	\ifarxiv{\hideLIPIcs}{}
\newsavebox\tmpbox
	\let\oldparagraph\paragraph
	\renewcommand\paragraph[1]{%
		\oldparagraph*{#1}
	}
	\let\oldparagraph\paragraph
	\renewcommand\paragraph[1]{%
		\oldparagraph{#1.}
	}
	\let\oldsubsection\subsection
	\renewcommand\subsection[1]{%
		\oldsubsection{#1.}%
	}
	\let\oldsubsubsection\subsubsection
	\renewcommand\subsubsection[1]{%
		\oldsubsubsection{#1.}%
	}
	\let\oldsubsection\subsection
	\renewcommand\subsection[1]{%
		\oldsubsection{#1.}%
	}
	\let\oldsubsubsection\subsubsection
	\renewcommand\subsubsection[1]{%
		\oldsubsubsection{#1.}%
	}
\let\epsilon\varepsilon
\def\myacknowledgements{}
\newcommand\leafnode[1]{%
	\ensuremath{R_{#1}}\xspace%
}
\newcommand\Cpp{C\texttt{++}\xspace}
\newcommand\runboundary[1]{%
	\ensuremath{B_{#1}}\xspace%
}
	\title{Multiway Powersort}
	\title{Multiway Powersort}
	\newcommand\email[1]{\texttt{#1}}
		\author{Anonymous Authors}
	\author{%
		William Cawley Gelling%
			\footnote{\strut University of Liverpool, UK, \protect\\
			\email{William.cawleygelling\,@\,gmail.com}}
	\and 
		Markus E.\ Nebel%
			\footnote{Bielefeld University, Germany, 
			\email{nebel\,@\,tf.unibi.de}}
	\and 
		Benjamin Smith%
			\footnote{University of Liverpool, UK, 
			\email{b.m.smith\,@\,liverpool.ac.uk}}
	\and
		Sebastian Wild%
			\footnote{University of Liverpool, UK, \protect\\
			\email{sebastian.wild\,@\,liverpool.ac.uk}}
	}
	\date{\small\today}
\begin{document}

\ifacm{}{\maketitle} 

\begin{abstract}
We present a stable mergesort variant, {\em Multiway Powersort},
that exploits existing runs and finds nearly-optimal merging orders for \emph{$k$-way} merges
with negligible overhead.
This builds on Powersort (Munro\,\&\,Wild, ESA 2018), which has recently replaced Timsort's suboptimal merge policy in the CPython reference implementation of Python, as well as in PyPy and further libraries.
Multiway Powersort reduces the number of memory transfers, which increasingly determine the cost of internal sorting (as observed with Multiway Quicksort (Kushagra et al., ALENEX 2014; Aumüller\,\&\,Dietzfelbinger, TALG 2016; Wild, PhD thesis 2016) and the inclusion of Dual-Pivot Quicksort in the Java runtime library).
We demonstrate that our 4-way Powersort implementation can achieve substantial speedups 
over standard (2-way) Powersort and other stable sorting methods without compromising the optimally run-adaptive performance of Powersort.
\end{abstract}

\ifacm{%
	\maketitle%
}{}

\section{Introduction}

Sorting a list of items is one of the foundational problems of computer science. Prominent
sorting algorithms include Quicksort and Mergesort, for which highly engineered
implementations exist. While Quicksort is typically faster, Mergesort has
the optimal $\Theta(n\log(n))$ worst-case behavior and is stable, \ie, 
it preserves the relative order of elements that compare equal under the sorting criterion.
Stability is a central requirement in many programming libraries, \eg, 
the Java runtime library requires the sorting method for objects to be stable.
Indeed, the need of a fast and stable general-purpose sorting method for the 
CPython reference implementation of the Python programming language was 
the main motivation for Tim Peters to develop a new variant of Mergesort, known as Timsort~\cite{Peters2002mailinglist}.

Apart from its stability, Mergesort is also particularly well-suited for \emph{adaptive sorting},
\ie, for saving effort when the input is already partially sorted (see also \wref{sec:preliminaries}).
Again, Timsort did pioneering work in bringing such adaptive sorting to most modern standard libraries
(including Python, Java, Android runtimes, the V8 Javascript engine, Rust, Swift, Apache Spark, Octave, and the NCBI \Cpp Toolkit),
giving users linear complexity for sufficiently sorted inputs.

More specifically, Timsort takes advantage of existing \emph{runs}, \ie,
contiguous regions of the input that are already in sorted order.
Timsort detects such runs on the fly and maintains a stack of runs yet to be merged. 
When the next run is found, a particular merge policy triggers a (potentially empty) sequence of merges 
before that new run is added to the stack;
this merge policy determines which merges are executed. 
Since the runs in the input can vary wildly in length,
the chosen merge order can have a huge impact on efficiency.
Consider as an indicative example an input with one run of length $n/2$ and $n/4$ runs of size 2.
Repeatedly merging the long run with one small run results would result in $\Theta(n^2)$ overall time,
whereas first merging all small runs (using a balanced tree) into a single run and then combined
this run with the original long run is much cheaper with $\Theta(n \log n)$ time.

Despite its vast success in practice, 
increased scrutiny revealed several issues with the original merge policy of Timsort:
First, an algorithmic bug allowed the run stack to grow larger than anticipated, \emph{twice} leaving widely deployed libraries vulnerable to memory access violations upon sorting an adversarial input~\cite{DeGouwBoerBubelHaehnleRotSteinhoefel2017,AugerJugeNicaudPivoteau2018}.
Second, Timsort incurs up to 50\% overhead over an optimal merge policy for certain inputs~\cite{BussKnop2019,AugerJugeNicaudPivoteau2018,AugerJugeNicaudPivoteau2018arxiv} 
with noticeable effects on running times~\cite{MunroWild2018}.
These deficiencies have sparked a flurry of work on suggesting possible 
alternatives~\cite{AugerJugeNicaudPivoteau2018,MunroWild2018,BussKnop2019,Juge2020}, some with much stronger efficiency guarantees.
After proving efficient in running time studies~\cite{PetersEtAl2018}, 
Powersort~\cite{MunroWild2018} has now replaced Timsort's original merge policy in CPython, PyPy, and AssemblyScript~\cite{Peters2021}.

Multiway algorithms have also seen recent adoption, with a surprising change in the Java runtime library's (unstable) method for sorting primitive-typed arrays.
There, a state-of-the-art implementation of classic
Quicksort~-- the standard sort for decades~-- was replaced in 2011 by a novel dual-pivot variant 
after the latter proved 10\% faster in practice. 
Detailed analysis revealed that the observed speedup is due to memory transfers: dual-pivot Quicksort
profits from savings in \emph{``scanned elements''}, the number of touched memory cells not in cache~\cite{KushagraLopezOrtizQiaoMunro2013,AumullerDietzfelbinger2016,NebelWildMartinez2016}.
Since improvements in CPU speed have outpaced improvements in memory bandwidth over several decades,
it now pays to avoid memory transfers in internal sorting, even at the expense of slightly increasing effort inside the CPU or more complicated code~\cite{AumullerDietzfelbinger2016,Wild2016}.
Multiway partitioning in Quicksort resp.\ multiway merging in Mergesort does exactly that: 
By getting more of the sorting task done in one pass over the data, the overall data volume transferred between memory and CPU is reduced. 
An abstract cost measure called {\em scanned elements} has proven appropriate to capture corresponding effects.

While multiway merging is an elementary trick of the trade to reduce the memory-transfer cost in Mergesort,
applying it to state-of-the-art adaptive sorting methods is a delicate task:
to not forsake the gains on partially sorted inputs, 
we have to generalize the policy for deciding which runs to merge.
Indeed in the case of Timsort, it seems quite unclear how to utilize multiway merges; Timsort's local rules are tailored to 2-way merges. Apart from Peeksort and Powersort, the same is true for all other practical methods suggested to replace Timsort, in particular adaptive Shivers Sort~\cite{Juge2020} and $\alpha$-MergeSort~\cite{BussKnop2019}.
By contrast, Powersort is based on a principled construction, 
mimicking nearly-optimal binary search tree algorithms~-- which allows us to present a multiway generalization.

\subsection{Contributions}
In this paper, we show how to use multiway merging in nearly-optimal Mergesort variants and that significant performance improvements result from that.
More precisely, our contributions are as follows:

(1) We design $k$-way Powersort, a provably optimal generalization of Powersort which simultaniously merges $k$ runs wherever this is possible.
To our knowledge, this is the first run-adaptive multiway Mergesort variant presented in the literature.
We prove that~-- just like standard 2-way Powersort~-- $k$-way Powersort uses $\le n\cdot \mathcal H + 3n$ comparisons to sort a list of $n$ elements, where $\mathcal H$ is the run-length entropy (defined in \wref{sec:preliminaries}) when $k$ is a power of two;
this is optimal in the worst case.
We furthermore show that the asymptotic worst-case merge cost for $k$-way Powersort is a factor $\log_2(k)$ smaller than for 2-way Powersort. Here the merge cost counts the total number of output elements of all merges conducted by an algorithm and thus approximates the amount of memory transfers in Mergesort variants. 

(2) We engineer an efficient implementation of 4-way Powersort in \Cpp
and demonstrate in an extensive running time study that it realizes performance improvements between
15 \% and 20\% over standard 2-way Powersort. This observation is robust against changes of the size of elements to be sorted (we compare integer and record data) and different amounts of presortedness of the input. 
Besides wall-clock running times, we determine scanned elements and cache performance in valgrind/cachegrind to explain the observed speedup.

(3) We confirm that the results reported in~\cite{MunroWild2018} for Java implementations of the algorithms
are reproducible in a low-level implementation.

\subsection{Related Work}
\label{sec:related-work}

Adaptive sorting refers to any algorithms that exploit structure in the input to improve their performance.
A survey of adaptive sorting up to 2018 and the story behind Timsort was given in~\cite{MunroWild2018},
so we focus on developments since then.
After~\cite{BussKnop2019} pointed out the suboptimal worst-case of Timsort's merge policy,
several papers suggested alternatives~\cite{AugerJugeNicaudPivoteau2018,Juge2020,MunroWild2018}.
The most noteworthy of these is Adaptive Shivers Sort~\cite{Juge2020}, which shares many favorable properties with Powersort and proves the existence of a merge policy that is optimal up to an $O(n)$ term in the theoretical model of Buss and Knop~\cite{BussKnop2019}. It is not clear whether Shivers Sort can easily be generalized to multiway merging.

A second related work studies Timsort's galloping merge method in detail.
Ghasemi et al.~\cite{GhasemiJugeKhalighinejad2022} show that most Mergesort variants can sort in $\mathcal H^* n + O(n)$ comparisons when using a galloping merge, where $\mathcal H^*$ is the ``dual-run-length entropy'' (which is at most the classical entropy of multiplicities of equal elements).
Note that for sorting arrays, galloping only saves on comparisons; one still has to move elements around.
The improvements obtainable from galloping merges are thus largely orthogonal to our work.

\section{Preliminaries}
\label{sec:preliminaries}

If considering a bottom-up approach for Mergesort, one can take advantage of continuous regions of the input that are already
sorted. We call such a region a \emph{run} and define both maximal weakly increasing (\ie, non-decreasing), or strictly decreasing
regions as runs~-- the latter can easily be reversed without issues towards stability. In the sequel, we will always denote the number 
of runs in the input by $r$ and by $L_0,\ldots, L_{r-1}$ their respective lengths such that $L_0+\cdots+L_{r-1}=n$ holds. 
From an algorithmic point of view, the unique partitioning of the input into sorted segments (runs) can be found by an initial scan of the input.

\subsection{Lower Bound}
In lower bounds for the cost of sorting, the binary Shannon entropy ${\mathcal H}$ often shows up:
For $p_1,\ldots,p_m \in (0,1)$ with $p_1 + \cdots + p_m = 1$ we define ${\mathcal H}(p_1,\ldots,p_m) = \sum p_i \lg(1/p_i)$, 
where here and throughout 5$\lg = \log_2$.

\begin{theorem}[{{\cite[Thm.\,2]{BarbayNavarro2013}}}]
	Sorting a list of $n$ elements with initial runs of lengths $L_0,\ldots, L_{r-1}$
	requires $\mathcal H(\frac{L_0}n,\ldots,\frac{L_{r-1}}n) n - O(n)$ 
	comparisons in the worst case.
\end{theorem}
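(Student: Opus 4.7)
My plan is to apply the standard information-theoretic argument via decision trees. Any comparison-based sorting algorithm corresponds to a binary decision tree whose leaves are indexed by input permutations, so on any family $\mathcal F$ of inputs yielding distinct sorted outputs the worst-case depth is at least $\lceil \lg |\mathcal F| \rceil$.

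First, I would exhibit a sufficiently large family $\mathcal F$ of permutations sharing the prescribed run profile. Fix the $r$ consecutive index intervals of lengths $L_0, \ldots, L_{r-1}$ and take $\mathcal F$ to be the set of permutations of $\{1, \ldots, n\}$ whose restriction to each interval is strictly increasing. Each such permutation is uniquely determined by the multiset assignment of values to intervals (since the order inside any interval is then forced), so $|\mathcal F| = \binom{n}{L_0, \ldots, L_{r-1}} = n!/(L_0! \cdots L_{r-1}!)$.

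Next, I would convert this count into the desired bound via Stirling's approximation $\lg k! = k \lg k - k \lg e + O(\lg k)$. Expanding and using $\sum_i L_i = n$ to group the leading terms yields
\[
\lg \binom{n}{L_0, \ldots, L_{r-1}} = \sum_{i=0}^{r-1} L_i \lg(n/L_i) + O\bigl(\textstyle\lg n + \sum_i \lg L_i\bigr) = n \, \mathcal H(L_0/n, \ldots, L_{r-1}/n) + O(n),
\]
where the error bound uses $\lg L_i \le L_i$ (valid for $L_i \ge 1$) to conclude $\sum_i \lg L_i \le \sum_i L_i = n$ uniformly in $r$.

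The main obstacle I anticipate is a small technical subtlety: not every permutation in $\mathcal F$ has its canonical (\ie, maximal) run decomposition equal to $(L_0, \ldots, L_{r-1})$, since adjacent increasing intervals may fuse into one longer run. One resolves this by further imposing a descent at each of the $r-1$ interval boundaries, which discards at most a constant fraction of $\mathcal F$ per boundary; the resulting family still has size at least $2^{-(r-1)} |\mathcal F|$, so the logarithm drops by at most $r - 1 \le n$, again absorbed into the $O(n)$ error term.
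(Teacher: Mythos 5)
The paper does not actually prove this statement; it is quoted verbatim from Barbay and Navarro, and your main line of argument---a decision-tree lower bound over the family $\mathcal F$ of permutations that are increasing on each of the $r$ prescribed blocks, with $|\mathcal F| = \binom{n}{L_0,\ldots,L_{r-1}}$, followed by Stirling---is exactly the standard proof of that result. The counting, the Stirling expansion, and the uniform $O(n)$ error bound (via $\lg L_i \le L_i$ and $r \le n$) are all fine. (One wording slip: the inputs in $\mathcal F$ all have the \emph{same} sorted output; what forces distinct leaves is that distinct input permutations require distinct rearrangements.)

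The last paragraph, however, contains a genuine error. The claim that imposing a descent at each of the $r-1$ block boundaries discards at most a constant fraction of $\mathcal F$ per boundary, so that the restricted family has size at least $2^{-(r-1)}|\mathcal F|$, is false: the descent events at different boundaries are not independent, and the inequality already fails for $n=3$ with $L_0=L_1=L_2=1$, where $|\mathcal F|=6$ but only the permutation $321$ has descents at both boundaries, and $1 < 6/4$. In the extreme case $L_0=\cdots=L_{r-1}=1$ the exact-profile family collapses to a single permutation while $2^{-(n-1)}\,n!$ is unbounded---and in that case the theorem itself, read as a statement about inputs whose \emph{maximal}-run decomposition equals $(L_0,\ldots,L_{r-1})$, would be false, since the unique such input is sortable with $O(n)$ comparisons. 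The resolution is that no patch is needed: the intended (and provable) reading is that the worst case ranges over all inputs that decompose into ascending blocks of the prescribed lengths, i.e., over $\mathcal F$ itself. Every member of $\mathcal F$ has a maximal-run decomposition that coarsens these blocks and hence run-length entropy at most $\mathcal H$, so $\lg|\mathcal F| \ge n\mathcal H - O(n)$ is exactly the lower bound needed to certify optimality of algorithms using $n\mathcal H + O(n)$ comparisons. Either adopt that reading and drop the final paragraph, or, if you insist on hard instances with the exact profile, you must genuinely bound the number of permutations with descent set exactly $S$ from below (a Bonferroni/inclusion--exclusion argument gives $|\mathcal F|\bigl(1-\sum_i \binom{L_{i-1}+L_i}{L_i}^{-1}\bigr)$, which works only when these binomials are large).
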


Where the run lengths are clear from the context, we will use $\mathcal H$ to abbreviate $\mathcal H(\frac{L_0}n,\ldots,\frac{L_{r-1}}n)$.

\subsection{Merging and its memory-transfer cost}

Merging describes the process of combining two or more sorted lists (\ie, runs) into a single, sorted list
containing the same elements, by means of pairwise comparisons of elements.
All our merge methods retain the input runs in a buffer area and produce the merged output in another area.
(In-place methods exist, but their time overhead renders them uncompetitive for our purposes.)
Important metrics for merging methods include the number of key comparisons and the number of \emph{scanned elements,} \ie,  memory accesses (read or write) that are not served from cache~\cite{NebelWildMartinez2016}.

Merging can be achieved by iteratively finding the minimum among the smallest remaining elements of the runs
and moving it to the output. 
When merging $k$ input runs at once, maintaining a \emph{tournament tree}~\cite{Knuth1998} of the run minima
(initialized using $k-1$ comparisons) allows us to find the next output element after $\lceil \lg(k)\rceil$ or $\lfloor \lg(k)\rfloor$ comparisons. So we can merge $k$ runs with $n$ elements in total using at most $\lceil \lg(k)\rceil n + k-1$ comparisons; unless the input runs are of very different lengths, this strategy is essentially optimal (for worst-case comparison cost).
When the input runs already reside in the buffer area, we can assume that each element is ``scanned'' exactly twice (fetched once and output once); it can be read and compared many times, but these accesses can be assumed to be cached (for moderate $k$).

\subsection{Run-adaptive Mergesort}
\label{sec:run-adaptive-mergesort}

Mergesort operates by successively merging runs until only a single one remains.
To get started, classic top-down Mergesort considers each individual element as a trivial run.
An \emph{adaptive} (\ie, \emph{run-adaptive}) Mergesort instead finds (contiguous) segments of the input that are already in order.

Since both scanned-element count and comparison count are linear in the output size of a merge (for fixed~$k$ and sufficiently long runs), a convenient abstract measure to compare differentMergesort variants is the \emph{merge cost}~\cite{BussKnop2019}, defined as
the total size of the outputs of all the merges.

Let $\leafnode 0,\ldots, \leafnode{r-1}$ be the runs in the input; recall that $L_i$ denotes the length of $R_i$.
Any stable run-adaptive mergesort can be described as a \emph{merge tree}~$T$: 
an (ordered rooted) tree with leaves $\leafnode0,\ldots,\leafnode{r-1}$ (appearing in that order from left to right).
Each internal node $v$ of $T$ corresponds to the result of merging its children;
we write $M(v)$ for the total length of the result of that merge.
The total merge cost of $T$ is then $M=M(T)=\sum_{v \in T} M(v)$.
$T$ is a $k$-way merge tree if all its nodes have degree at most $k$.
We remark here that it is in general not possible to have all merge-tree nodes combine exactly $k$ runs, therefore a $k$-way merge tree is allowed to contain nodes of smaller degree.
(Though we do not allow unary nodes.)

We further denote by $\runboundary i$ the boundary between the $(i-1)$st and $i$th run ($i=1,\ldots,r-1$).
In binary merge trees ($k=2$), there is a one-to-one correspondence between the $\runboundary i$ and 
the (inner/non-leaf) merge-tree nodes; for multiway merging, this becomes many-to-one: each merge-tree node $v$ of degree $d$ is associated to exactly $d-1$ (not necessarily adjacent) run boundaries $B(v,1),\ldots,B(v,d-1)$;
conversely each run boundary $\runboundary i$ is assigned exactly to one merge-tree node $m(\runboundary i)$.

Let $d_i$ be the \emph{depth} of leaf \leafnode i in $T$,
where depth is the number of edges on the path to the root.
Every element in run \leafnode i is counted exactly $d_i$ times in $\sum_{v} M(v)$,
so we have $M = \sum_{i=0}^{r-1} d_i\cdot L_i$.

\section{Multiway Powersort}
\label{sec:powersort}

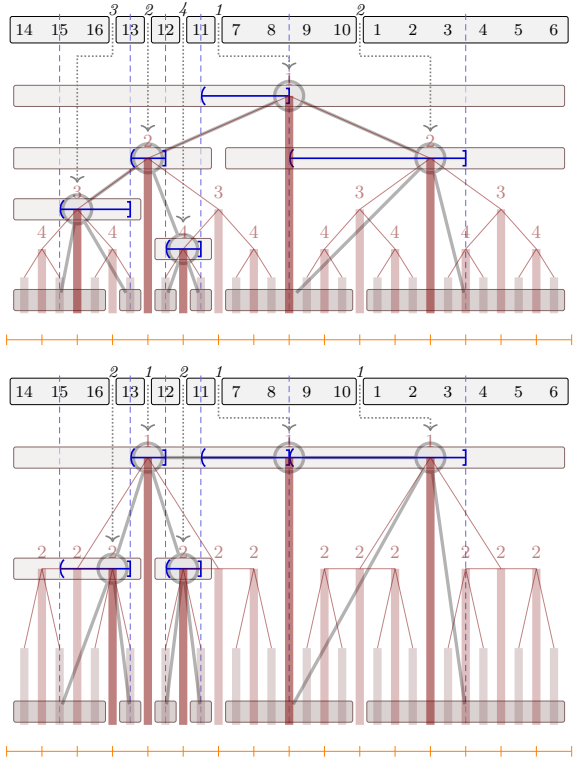
\begin{figure}[tbhp]
	\centering
	\adjustbox{max width=.9\linewidth}{%
		\begin{tikzpicture}[
		every node/.style = {font=\small},
		array entry/.style = {font=\footnotesize},
		power label/.style = {font=\footnotesize\itshape},
		run/.style = {draw=black,fill=black!5,rounded corners=1pt},
		merge result/.style = {run,draw=black!80!red,fill=black!80!red!10,opacity=.6},
		run leaf/.style = {run,draw=black!80!red,fill=black!80!red!30,opacity=.6},
		merge node/.style = {ultra thick,draw=black,fill=black!20,circle,opacity=.3,minimum size=14pt,},
		mid/.style = {densely dashed,blue!80!black!60},
		mid interval/.style = {{Arc Barb[arc=140,width=10pt]}-{Bracket[width=8pt]},shorten >=-.4pt,thick,blue!80!black},
		comb/.style = {line join=round,red!50!black,line width=4pt,opacity=.5},
		comb leaf/.style = {line join=round,red!50!black!40!gray,line width=4pt,opacity=.5},
		xscale=.6,yscale=.45,
]


	\begin{scope}[shift={(0,2)}]
		\draw[run] (0-.4,0) rectangle (2+.4,1) ;
		\draw[run] (3-.4,0) rectangle (3+.4,1) ;
		\draw[run] (4-.4,0) rectangle (4+.4,1) ;
		\draw[run] (5-.4,0) rectangle (5+.4,1) ;
		\draw[run] (6-.4,0) rectangle (9+.4,1) ;
		\draw[run] (10-.4,0) rectangle (15+.4,1) ;
		\draw[mid] (1.000000,1.1) -- ++(0,-8.214286-3.1) ;
		\draw[mid] (3.000000,1.1) -- ++(0,-8.214286-3.1) ;
		\draw[mid] (4.000000,1.1) -- ++(0,-8.214286-3.1) ;
		\draw[mid] (5.000000,1.1) -- ++(0,-8.214286-3.1) ;
		\draw[mid] (7.500000,1.1) -- ++(0,-8.214286-3.1) ;
		\draw[mid] (12.500000,1.1) -- ++(0,-8.214286-3.1) ;
		\node[power label] (B0) at (2+0.5,1.25) {3} ;
		\node[power label] (B1) at (3+0.5,1.25) {2} ;
		\node[power label] (B2) at (4+0.5,1.25) {4} ;
		\node[power label] (B3) at (5+0.5,1.25) {1} ;
		\node[power label] (B4) at (9+0.5,1.25) {2} ;
		\node[array entry] at (0,0.5) {14} ;
		\node[array entry] at (1,0.5) {15} ;
		\node[array entry] at (2,0.5) {16} ;
		\node[array entry] at (3,0.5) {13} ;
		\node[array entry] at (4,0.5) {12} ;
		\node[array entry] at (5,0.5) {11} ;
		\node[array entry] at (6,0.5) {7} ;
		\node[array entry] at (7,0.5) {8} ;
		\node[array entry] at (8,0.5) {9} ;
		\node[array entry] at (9,0.5) {10} ;
		\node[array entry] at (10,0.5) {1} ;
		\node[array entry] at (11,0.5) {2} ;
		\node[array entry] at (12,0.5) {3} ;
		\node[array entry] at (13,0.5) {4} ;
		\node[array entry] at (14,0.5) {5} ;
		\node[array entry] at (15,0.5) {6} ;
	\end{scope}

	\begin{scope}[shift={(0,-8.214286)}]
		\draw[merge result] (0-0.3,8.214286-0.4) rectangle ++(16-0.4,0.8) ;
		\draw[merge result] (0-0.3,5.857143-0.4) rectangle ++(6-0.4,0.8) ;
		\draw[merge result] (0-0.3,3.928571-0.4) rectangle ++(4-0.4,0.8) ;
		\draw[run leaf] (0-0.3,0.500000-0.4) rectangle ++(3-0.4,0.8) ;
		\draw[run leaf] (3-0.3,0.500000-0.4) rectangle ++(1-0.4,0.8) ;
		\draw[merge result] (4-0.3,2.428571-0.4) rectangle ++(2-0.4,0.8) ;
		\draw[run leaf] (4-0.3,0.500000-0.4) rectangle ++(1-0.4,0.8) ;
		\draw[run leaf] (5-0.3,0.500000-0.4) rectangle ++(1-0.4,0.8) ;
		\draw[merge result] (6-0.3,5.857143-0.4) rectangle ++(10-0.4,0.8) ;
		\draw[run leaf] (6-0.3,0.500000-0.4) rectangle ++(4-0.4,0.8) ;
		\draw[run leaf] (10-0.3,0.500000-0.4) rectangle ++(6-0.4,0.8) ;
		\node[merge node] (merge-3) at (7.500000,8.214286) {} ;
		\node[merge node] (merge-1) at (3.500000,5.857143) {} ;
		\node[merge node] (merge-0) at (1.500000,3.928571) {} ;
		\node[] (merge-0l) at (1.000000,0.500000) {} ;
		\node[] (merge-1l) at (3.000000,0.500000) {} ;
		\node[merge node] (merge-2) at (4.500000,2.428571) {} ;
		\node[] (merge-2l) at (4.000000,0.500000) {} ;
		\node[] (merge-3l) at (5.000000,0.500000) {} ;
		\node[merge node] (merge-4) at (11.500000,5.857143) {} ;
		\node[] (merge-4l) at (7.500000,0.500000) {} ;
		\node[] (merge-5l) at (12.500000,0.500000) {} ;
		\draw[merge node] (merge-0) to (merge-0l) ; 
		\draw[merge node] (merge-0) to (merge-1l) ; 
		\draw[merge node] (merge-1) to (merge-0) ; 
		\draw[merge node] (merge-2) to (merge-2l) ; 
		\draw[merge node] (merge-2) to (merge-3l) ; 
		\draw[merge node] (merge-1) to (merge-2) ; 
		\draw[merge node] (merge-3) to (merge-1) ; 
		\draw[merge node] (merge-4) to (merge-4l) ; 
		\draw[merge node] (merge-4) to (merge-5l) ; 
		\draw[merge node] (merge-3) to (merge-4) ; 
		\draw[mid interval] (1.000000,3.928571) -- ++(3.000000-1.000000,0) ;
		\draw[mid interval] (3.000000,5.857143) -- ++(4.000000-3.000000,0) ;
		\draw[mid interval] (4.000000,2.428571) -- ++(5.000000-4.000000,0) ;
		\draw[mid interval] (5.000000,8.214286) -- ++(7.500000-5.000000,0) ;
		\draw[mid interval] (7.500000,5.857143) -- ++(12.500000-7.500000,0) ;
		\tikzset{inactive/.style={draw opacity=.25}}

		\foreach \i/\x/\h/\k/\s in {
			1/{1/16}/2.428571/4/inactive,%
			2/{2/16}/3.928571/3/,%
			3/{3/16}/2.428571/4/inactive,%
			4/{4/16}/5.857143/2/,%
			5/{5/16}/2.428571/4/,%
			6/{6/16}/3.928571/3/inactive,%
			7/{7/16}/2.428571/4/inactive,%
			8/{8/16}/8.214286/1/,%
			9/{9/16}/2.428571/4/inactive,%
			10/{10/16}/3.928571/3/inactive,%
			11/{11/16}/2.428571/4/inactive,%
			12/{12/16}/5.857143/2/,%
			13/{13/16}/2.428571/4/inactive,%
			14/{14/16}/3.928571/3/inactive,%
			15/{15/16}/2.428571/4/inactive%
		} {
			\draw[comb,\s] ($(-.5,0)!\x!(15+.5,0)$) -- ++(0,\h) coordinate (ruler-\i) node[above] {\k} ;
		}
		\foreach \i in {1,...,16} {
			\draw[comb leaf,inactive] ($(-.5,0)!{(2*\i-1)/32}!(15+.5,0)$) -- ++(0,1.357143) coordinate (leaf-\i);
		}
		\foreach \i/\p in {
			1/2,%
			2/4,%
			3/2,%
			4/8,%
			5/6,%
			6/4,%
			7/6,%
			9/10,%
			10/12,%
			11/10,%
			12/8,%
			13/14,%
			14/12,%
			15/14%
		} {
			\draw[comb,very thin]  (ruler-\i) -- (ruler-\p) ;
		}
\draw[comb,very thin]  (ruler-1) -- (leaf-1) ;
\draw[comb,very thin]  (ruler-1) -- (leaf-2) ;
\draw[comb,very thin]  (ruler-3) -- (leaf-3) ;
\draw[comb,very thin]  (ruler-3) -- (leaf-4) ;
\draw[comb,very thin]  (ruler-5) -- (leaf-5) ;
\draw[comb,very thin]  (ruler-5) -- (leaf-6) ;
\draw[comb,very thin]  (ruler-7) -- (leaf-7) ;
\draw[comb,very thin]  (ruler-7) -- (leaf-8) ;
\draw[comb,very thin]  (ruler-9) -- (leaf-9) ;
\draw[comb,very thin]  (ruler-9) -- (leaf-10) ;
\draw[comb,very thin]  (ruler-11) -- (leaf-11) ;
\draw[comb,very thin]  (ruler-11) -- (leaf-12) ;
\draw[comb,very thin]  (ruler-13) -- (leaf-13) ;
\draw[comb,very thin]  (ruler-13) -- (leaf-14) ;
\draw[comb,very thin]  (ruler-15) -- (leaf-15) ;
\draw[comb,very thin]  (ruler-15) -- (leaf-16) ;
		\foreach \i in {0,...,4} {
			\draw[densely dotted,thick,black!50,->,shorten >=5pt] (B\i) ++(0,-.5) -- ++(0,-1.25) -| (merge-\i) ;
		}
	\end{scope}

	\begin{scope}[shift={(0,-8.214286-1)}]
		\draw[orange,thin] (0.000000-.5,-.2) -- ++(0,.4) ;
		\draw[orange,thin] (1.000000-.5,-.2) -- ++(0,.4) ;
		\draw[orange,thin] (2.000000-.5,-.2) -- ++(0,.4) ;
		\draw[orange,thin] (3.000000-.5,-.2) -- ++(0,.4) ;
		\draw[orange,thin] (4.000000-.5,-.2) -- ++(0,.4) ;
		\draw[orange,thin] (5.000000-.5,-.2) -- ++(0,.4) ;
		\draw[orange,thin] (6.000000-.5,-.2) -- ++(0,.4) ;
		\draw[orange,thin] (7.000000-.5,-.2) -- ++(0,.4) ;
		\draw[orange,thin] (8.000000-.5,-.2) -- ++(0,.4) ;
		\draw[orange,thin] (9.000000-.5,-.2) -- ++(0,.4) ;
		\draw[orange,thin] (10.000000-.5,-.2) -- ++(0,.4) ;
		\draw[orange,thin] (11.000000-.5,-.2) -- ++(0,.4) ;
		\draw[orange,thin] (12.000000-.5,-.2) -- ++(0,.4) ;
		\draw[orange,thin] (13.000000-.5,-.2) -- ++(0,.4) ;
		\draw[orange,thin] (14.000000-.5,-.2) -- ++(0,.4) ;
		\draw[orange,thin] (15.000000-.5,-.2) -- ++(0,.4) ;
		\draw[orange,thin] (16.000000-.5,-.2) -- ++(0,.4) ;
		\draw[orange] (-.5,0) -- ++(16,0) ;
	\end{scope}
\end{tikzpicture}%
	}
	\\[1ex]
	\adjustbox{max width=.9\linewidth}{%
		\begin{tikzpicture}[
		every node/.style = {font=\small},
		array entry/.style = {font=\footnotesize},
		power label/.style = {font=\footnotesize\itshape},
		run/.style = {draw=black,fill=black!5,rounded corners=1pt},
		merge result/.style = {run,draw=black!80!red,fill=black!80!red!10,opacity=.6},
		run leaf/.style = {run,draw=black!80!red,fill=black!80!red!30,opacity=.6},
		merge node/.style = {ultra thick,draw=black,fill=black!20,circle,opacity=.3,minimum size=14pt,},
		mid/.style = {densely dashed,blue!80!black!60},
		mid interval/.style = {{Arc Barb[arc=140,width=10pt]}-{Bracket[width=8pt]},shorten >=-.4pt,thick,blue!80!black},
		comb/.style = {line join=round,red!50!black,line width=4pt,opacity=.5},
		comb leaf/.style = {line join=round,red!50!black!40!gray,line width=4pt,opacity=.5},
		xscale=.6,yscale=.45,
]


	\begin{scope}[shift={(0,2)}]
		\draw[run] (0-.4,0) rectangle (2+.4,1) ;
		\draw[run] (3-.4,0) rectangle (3+.4,1) ;
		\draw[run] (4-.4,0) rectangle (4+.4,1) ;
		\draw[run] (5-.4,0) rectangle (5+.4,1) ;
		\draw[run] (6-.4,0) rectangle (9+.4,1) ;
		\draw[run] (10-.4,0) rectangle (15+.4,1) ;
		\draw[mid] (1.000000,1.1) -- ++(0,-10.100000-3.1) ;
		\draw[mid] (3.000000,1.1) -- ++(0,-10.100000-3.1) ;
		\draw[mid] (4.000000,1.1) -- ++(0,-10.100000-3.1) ;
		\draw[mid] (5.000000,1.1) -- ++(0,-10.100000-3.1) ;
		\draw[mid] (7.500000,1.1) -- ++(0,-10.100000-3.1) ;
		\draw[mid] (12.500000,1.1) -- ++(0,-10.100000-3.1) ;
		\node[power label] (B0) at (2+0.5,1.25) {2} ;
		\node[power label] (B1) at (3+0.5,1.25) {1} ;
		\node[power label] (B2) at (4+0.5,1.25) {2} ;
		\node[power label] (B3) at (5+0.5,1.25) {1} ;
		\node[power label] (B4) at (9+0.5,1.25) {1} ;
		\node[array entry] at (0,0.5) {14} ;
		\node[array entry] at (1,0.5) {15} ;
		\node[array entry] at (2,0.5) {16} ;
		\node[array entry] at (3,0.5) {13} ;
		\node[array entry] at (4,0.5) {12} ;
		\node[array entry] at (5,0.5) {11} ;
		\node[array entry] at (6,0.5) {7} ;
		\node[array entry] at (7,0.5) {8} ;
		\node[array entry] at (8,0.5) {9} ;
		\node[array entry] at (9,0.5) {10} ;
		\node[array entry] at (10,0.5) {1} ;
		\node[array entry] at (11,0.5) {2} ;
		\node[array entry] at (12,0.5) {3} ;
		\node[array entry] at (13,0.5) {4} ;
		\node[array entry] at (14,0.5) {5} ;
		\node[array entry] at (15,0.5) {6} ;
	\end{scope}

	\begin{scope}[shift={(0,-10.100000)}]
		\draw[merge result] (0-0.3,10.100000-0.4) rectangle ++(16-0.4,0.8) ;
		\draw[merge result] (0-0.3,5.900000-0.4) rectangle ++(4-0.4,0.8) ;
		\draw[run leaf] (0-0.3,0.500000-0.4) rectangle ++(3-0.4,0.8) ;
		\draw[run leaf] (3-0.3,0.500000-0.4) rectangle ++(1-0.4,0.8) ;
		\draw[merge result] (4-0.3,5.900000-0.4) rectangle ++(2-0.4,0.8) ;
		\draw[run leaf] (4-0.3,0.500000-0.4) rectangle ++(1-0.4,0.8) ;
		\draw[run leaf] (5-0.3,0.500000-0.4) rectangle ++(1-0.4,0.8) ;
		\draw[run leaf] (6-0.3,0.500000-0.4) rectangle ++(4-0.4,0.8) ;
		\draw[run leaf] (10-0.3,0.500000-0.4) rectangle ++(6-0.4,0.8) ;
		\node[merge node] (merge-3) at (7.500000,10.100000) {} ;
		\node[merge node] (merge-1) at (3.500000,10.100000) {} ;
		\node[merge node] (merge-0) at (2.500000,5.900000) {} ;
		\node[] (merge-0l) at (1.000000,0.500000) {} ;
		\node[] (merge-1l) at (3.000000,0.500000) {} ;
		\node[merge node] (merge-2) at (4.500000,5.900000) {} ;
		\node[] (merge-2l) at (4.000000,0.500000) {} ;
		\node[] (merge-3l) at (5.000000,0.500000) {} ;
		\node[merge node] (merge-4) at (11.500000,10.100000) {} ;
		\node[] (merge-4l) at (7.500000,0.500000) {} ;
		\node[] (merge-5l) at (12.500000,0.500000) {} ;
		\draw[merge node] (merge-0) to (merge-0l) ; 
		\draw[merge node] (merge-0) to (merge-1l) ; 
		\draw[merge node] (merge-1) to (merge-0) ; 
		\draw[merge node] (merge-2) to (merge-2l) ; 
		\draw[merge node] (merge-2) to (merge-3l) ; 
		\draw[merge node] (merge-1) to (merge-2) ; 
		\draw[merge node] (merge-3) to (merge-1) ; 
		\draw[merge node] (merge-4) to (merge-4l) ; 
		\draw[merge node] (merge-4) to (merge-5l) ; 
		\draw[merge node] (merge-3) to (merge-4) ; 
		\draw[mid interval] (1.000000,5.900000) -- ++(3.000000-1.000000,0) ;
		\draw[mid interval] (3.000000,10.100000) -- ++(4.000000-3.000000,0) ;
		\draw[mid interval] (4.000000,5.900000) -- ++(5.000000-4.000000,0) ;
		\draw[mid interval] (5.000000,10.100000) -- ++(7.500000-5.000000,0) ;
		\draw[mid interval] (7.500000,10.100000) -- ++(12.500000-7.500000,0) ;
		\tikzset{inactive/.style={draw opacity=.25}}

		\foreach \i/\x/\h/\k/\s in {
			1/{1/16}/5.900000/2/inactive,%
			2/{2/16}/5.900000/2/inactive,%
			3/{3/16}/5.900000/2/,%
			4/{4/16}/10.100000/1/,%
			5/{5/16}/5.900000/2/,%
			6/{6/16}/5.900000/2/inactive,%
			7/{7/16}/5.900000/2/inactive,%
			8/{8/16}/10.100000/1/,%
			9/{9/16}/5.900000/2/inactive,%
			10/{10/16}/5.900000/2/inactive,%
			11/{11/16}/5.900000/2/inactive,%
			12/{12/16}/10.100000/1/,%
			13/{13/16}/5.900000/2/inactive,%
			14/{14/16}/5.900000/2/inactive,%
			15/{15/16}/5.900000/2/inactive%
		} {
			\draw[comb,\s] ($(-.5,0)!\x!(15+.5,0)$) -- ++(0,\h) coordinate (ruler-\i) node[above] {\k} ;
		}
		\foreach \i in {1,...,16} {
			\draw[comb leaf,inactive] ($(-.5,0)!{(2*\i-1)/32}!(15+.5,0)$) -- ++(0,2.900000) coordinate (leaf-\i);
		}
		\foreach \i/\p in {
			1/2,%
			2/4,%
			3/2,%
			4/8,%
			5/6,%
			6/4,%
			7/6,%
			9/10,%
			10/12,%
			11/10,%
			12/8,%
			13/14,%
			14/12,%
			15/14%
		} {
			\draw[comb,very thin]  (ruler-\i) -- (ruler-\p) ;
		}
\draw[comb,very thin]  (ruler-1) -- (leaf-1) ;
\draw[comb,very thin]  (ruler-1) -- (leaf-2) ;
\draw[comb,very thin]  (ruler-3) -- (leaf-3) ;
\draw[comb,very thin]  (ruler-3) -- (leaf-4) ;
\draw[comb,very thin]  (ruler-5) -- (leaf-5) ;
\draw[comb,very thin]  (ruler-5) -- (leaf-6) ;
\draw[comb,very thin]  (ruler-7) -- (leaf-7) ;
\draw[comb,very thin]  (ruler-7) -- (leaf-8) ;
\draw[comb,very thin]  (ruler-9) -- (leaf-9) ;
\draw[comb,very thin]  (ruler-9) -- (leaf-10) ;
\draw[comb,very thin]  (ruler-11) -- (leaf-11) ;
\draw[comb,very thin]  (ruler-11) -- (leaf-12) ;
\draw[comb,very thin]  (ruler-13) -- (leaf-13) ;
\draw[comb,very thin]  (ruler-13) -- (leaf-14) ;
\draw[comb,very thin]  (ruler-15) -- (leaf-15) ;
\draw[comb,very thin]  (ruler-15) -- (leaf-16) ;
		\foreach \i in {0,...,4} {
			\draw[densely dotted,thick,black!50,->,shorten >=5pt] (B\i) ++(0,-.5) -- ++(0,-1.25) -| (merge-\i) ;
		}
	\end{scope}

	\begin{scope}[shift={(0,-10.100000-1)}]
		\draw[orange,thin] (0.000000-.5,-.2) -- ++(0,.4) ;
		\draw[orange,thin] (1.000000-.5,-.2) -- ++(0,.4) ;
		\draw[orange,thin] (2.000000-.5,-.2) -- ++(0,.4) ;
		\draw[orange,thin] (3.000000-.5,-.2) -- ++(0,.4) ;
		\draw[orange,thin] (4.000000-.5,-.2) -- ++(0,.4) ;
		\draw[orange,thin] (5.000000-.5,-.2) -- ++(0,.4) ;
		\draw[orange,thin] (6.000000-.5,-.2) -- ++(0,.4) ;
		\draw[orange,thin] (7.000000-.5,-.2) -- ++(0,.4) ;
		\draw[orange,thin] (8.000000-.5,-.2) -- ++(0,.4) ;
		\draw[orange,thin] (9.000000-.5,-.2) -- ++(0,.4) ;
		\draw[orange,thin] (10.000000-.5,-.2) -- ++(0,.4) ;
		\draw[orange,thin] (11.000000-.5,-.2) -- ++(0,.4) ;
		\draw[orange,thin] (12.000000-.5,-.2) -- ++(0,.4) ;
		\draw[orange,thin] (13.000000-.5,-.2) -- ++(0,.4) ;
		\draw[orange,thin] (14.000000-.5,-.2) -- ++(0,.4) ;
		\draw[orange,thin] (15.000000-.5,-.2) -- ++(0,.4) ;
		\draw[orange,thin] (16.000000-.5,-.2) -- ++(0,.4) ;
		\draw[orange] (-.5,0) -- ++(16,0) ;
	\end{scope}
\end{tikzpicture}%
	}

	\caption{%
		Example merge tree for Powersort (top) and 4-way Powersort (bottom) for an input of size $n=16$: 
		Run boundaries are mapped (gray arrows) to nodes of a (virtual) perfectly balanced binary resp.\ 4-ary tree $T_V$ (shown in light red),
		which has the array elements as its leaves.
		The midpoints of two adjacent runs form the horizontal range in which this node is allowed to move; it then ``snaps'' to the highest pole in that range.
	}
	\label{fig:powersort-example-n16}
\end{figure}
\begin{figure*}[tbhp]
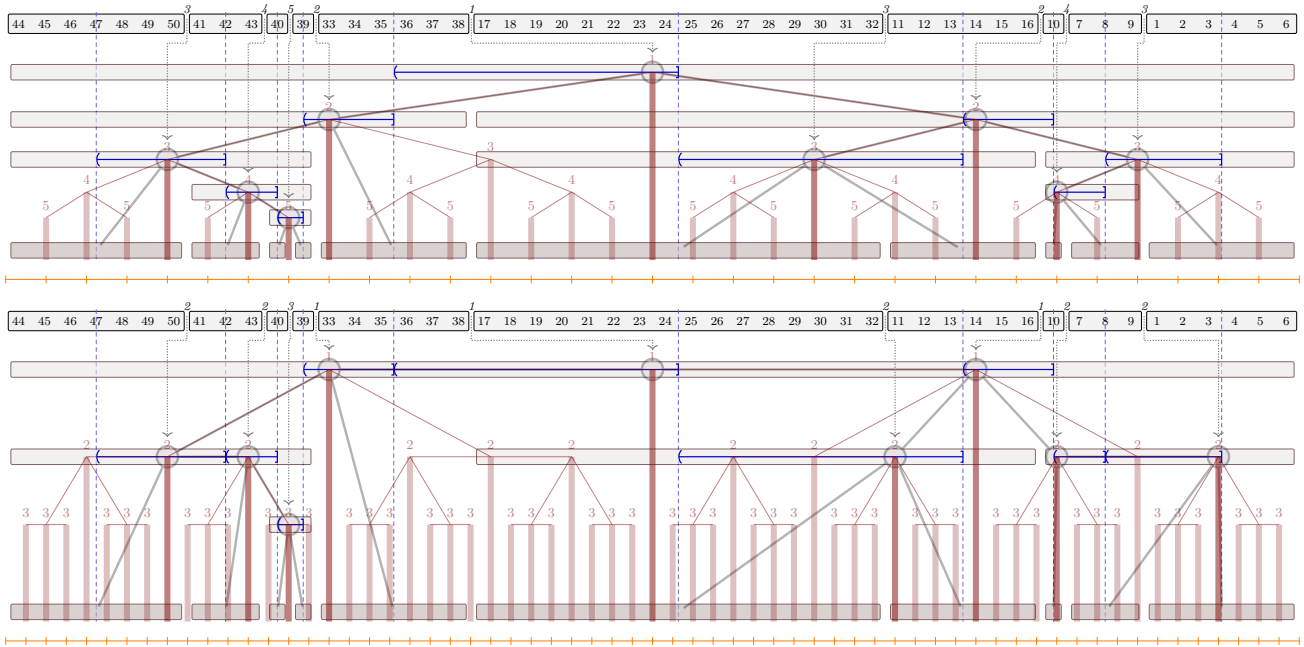

	\adjustbox{max width=\textwidth}{%
		\input{powersort-n50-2way.tikz}%
	}
	\\[1ex]
	\adjustbox{max width=\textwidth}{%
		\input{powersort-n50-4way.tikz}%
	}

	\caption{%
		Example merge tree for Powersort (top) and 4-way Powersort (bottom) for an input of size $n=50$: 
		Here, merges are ``rounded'' to nodes in a (virtual) perfectly balanced binary resp.\ 4-ary tree $T_V$ (shown in light red);
		drawing style as in \wref{fig:powersort-example-n16}.
	}
	\label{fig:powersort-example}
\end{figure*}

We now introduce Powersort's merge policy for general $k$-way merging.
We expressly include the special case $k=2$ here; the resulting merge policy is equivalent 
to Powersort from~\cite{MunroWild2018}.

\subsection{Intuition and Merge Tree}

Let $k\ge 2$ be fixed, and first assume for ease of presentation that $n = k^m$, $m\in\N$.
The intuition behind Powersort is to imagine a complete $k$-ary tree~$T_V$ sitting on top of the array,
where each element $A[i]$ of the input corresponds to a leaf of~$T_V$.
$T_V$ corresponds to the merge tree of a non-adaptive $k$-way Mergesort (starting with individual elements).
Now each run boundary \runboundary j also corresponds to an (inner) node $w$ of this (virtual) $k$-ary tree~$T_V$,
namely the lowest common ancestor of the two leafs adjacent to \runboundary j.

We cannot follow $T_V$ if we want to retain all existing runs (most will usually straddle node boundaries); 
but we will use the depths $d_w$ of the nodes of $T_V$ as a guide:
Any non-leaf node $w$ in $T_V$ splits (a range of) the array into $k$ segments.  
At the $k-1$ boundaries between these segments, we draw a vertical bar of height $m-d_w$ (see \wref{fig:powersort-example-n16}).
Then, we assign each boundary \runboundary j its \emph{midpoint interval}: the horizontal interval
from the middle of \leafnode{j-1} (exclusive) up to the middle of \leafnode j (inclusive).
Midpoint intervals are shown in blue in \wref{fig:powersort-example-n16}.
Now, the \emph{power} $P_j$ of \runboundary j is defined as the depth $d_w$ of the node $w$ contributing the tallest vertical bar inside $B_j$'s midpoint interval (\wref{def:node-power}).

Since the midpoint intervals are disjoint, each vertical bar is assigned at most one \runboundary j,
and so each node $w$ in $T_V$ is assigned at most $k-1$ run boundaries.
These $k-1$ run boundaries define one merge-tree node~$v(w)$; 
the collection of $v(w)$ forms the $k$-way merge tree $T$ of Powersort.
Intuitively, $T$ is a good merge tree since it uses boundaries as close to the perfectly balanced $T_V$
while respecting existing runs.
Note that going from $2$-way powers to $4$-way powers as in \wref{fig:powersort-example-n16}
corresponds to a simple transformation: 
We ``squish'' pairs of adjacent levels of $2$-way powers into one layer of $4$-way powers: 
$P^{(4)}_i = \lfloor (P^{(2)}_i-1)/2 \rfloor +1$.
We thus obtain the merge tree of $4$-way Powersort from the one for $2$-way Powersort by letting all nodes
at even depth absorb their children (and adopt their grandchildren).

When $n$ is not a power of $k$, the virtual tree $T_V$ does not align with element boundaries, 
but we can treat the array and its runs as \emph{continuous} intervals; 
the assignment and argument from above work the same way; see \wref{fig:powersort-example}
for an example.
Formally, we map the array to the unit interval, partitioned by the runs to define the power of run boundaries:

\begin{definition}[Run Boundary Power]
\label{def:node-power}
	Let $k\ge 2$ be fixed.
	For run lengths $L_0,\ldots,L_{r-1}$, 
	set $\ell_i = L_i / n$.
	For $1\le i < r$,
	let \runboundary i be the run boundary between the $(i-1)$st and $i$th run.
	The \thmemph{($k$-way) power} of \runboundary i is
	\begin{align*}%
			P^{(k)}_i
		&\rel=
			\min \Bigl\{ 
				p\in\N : \bigl\lfloor a_i \cdot k^{p} \bigr\rfloor 
					< \bigl\lfloor b_i \cdot k^{p} \bigr\rfloor  
			\Bigr\},\;\;
	\\[-.5ex] &
	\text{where }
			a_i
		\rel=
			\sum_{j=0}^{i-1} \ell_j - \tfrac12 \ell_{i-1},\;\;
			b_i
		\rel=
			\sum_{j=0}^{i-1} \ell_j + \tfrac12 \ell_{i}.
	\end{align*}
\end{definition}
We drop the superscript when $k$ is clear from context.
\wref{fig:power} illustrates \wref{def:node-power}.

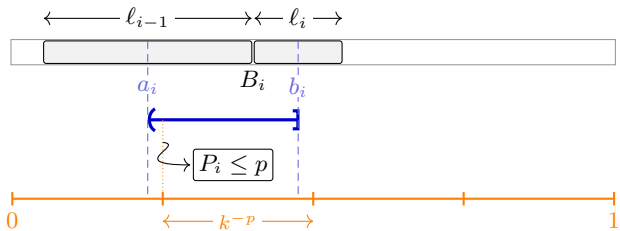
\begin{figure}[bth]
	\begin{tikzpicture}[
				every node/.style = {font=\small},
				run/.style = {draw=black,fill=black!5,rounded corners=1pt,inner sep=0pt},
				mid/.style = {densely dashed,blue!80!black!60},
				mid interval/.style = {{Arc Barb[arc=140,width=10pt]}-{Bracket[width=8pt]},shorten >=-.5pt,very thick,blue!80!black},
				comb/.style = {line join=round,blue,line width=4pt,opacity=.3},
				xscale=8,yscale=.3,
		]
			\def\yskip{5}
			\def\l{.05}
			\def\m{.4}
			\def\r{.55}
			\def\a{.225}
			\def\b{.475}

			\node[draw=black!40,inner sep=.5pt,fit={(0,0) (1,1)}] {};
			
			\foreach \f/\t in 
				{\l/\m,\m/\r} 
			{
				\node[run,fit={(\f+.002,0) (\t-.002,1)}] {} ;
			}
			\draw[dotted] (\m,1) ++(0,-1.75) node[] {\runboundary i} ;
			
			\begin{scope}[shift={(0,-6)},thick,orange]
				\draw[-] (0,0) -- (1,0);
				\foreach \x/\l in {0/0,1/1} {
					\draw (\x,.3) -- ++ (0,-.6) ++(0,-1) node[anchor=base] {\l} ;
				}
				\foreach \i in {1,...,3} {
					\draw ({\i/4},.3) -- ++ (0,-.6) ++(0,-1) 
						;
				}
				\draw[thin,<->,shorten >=0pt, shorten <=0pt] 
					(.5,-1) -- node[fill=white,inner sep=2pt] {\smaller $k^{-p}$} (.25,-1) ;
			\end{scope}
			
			\begin{scope}[shift={(0,-2.5)}]
				\draw[mid interval] (\a,0) -- (\b,0) ;
			\end{scope}
			
			\foreach \x/\l in {\a/a,\b/b} {
				\draw[mid] ({\x},1) -- ({\x},-6);
				\node[mid,fill=white,inner sep=1pt] at (\x,-1) {$\like{\l}{\l_{i}}$} ;
			}
			
		\foreach \f/\t/\L in {\l/\m/$\ell_{i-1}$,\m/\r/$\ell_i$} {
			\draw[<->,shorten >=.5pt, shorten <=.5pt] 
				(\f,2) -- node[fill=white,inner sep=2pt] {\L} (\t,2) ;
		}
		
		\draw[orange,densely dotted] (.25,-6) -- (.25,-2.5) ;
		\draw[->,shorten >=1pt] (.25,-3.5) to[out=-70,in=180,looseness=.3] (.3,-4.5) 
				node[draw,right,inner sep=2pt,rounded corners=1pt] {$P_i \le p$} ;
	
	\end{tikzpicture}
	\caption{%
		Illustration of the power of a run boundary.
		Since there is an orange tick inside the midpoint interval, we obtain $P^{(k)}_i \le p$;
		in the example, the midpoint interval is $(a_i,b_i] = (0.225,0.475]$, $k=4$ and $p=1$, so $P^{(4)}_i = 1$.
	}
	\label{fig:power}
\end{figure}

Based on (just) $P_1,\ldots,P_{r-1}$, we can recover the merge tree $T$
by recursively finding all occurrences of the smallest occurring power
and using these run boundaries as the splitting points for recursion.
The pseudocode in \wref{alg:merge-tree} makes this concrete.

\begin{algorithm}[tbhp]
\hspace*{2em}\begin{minipage}{0.9\linewidth}%
\small
\tikzset{every node/.style={font=\scriptsize}}
	\begin{codebox}
		\Procname{$\proc{KwayTree}_k(A[b..e))$}
	\li		Let $s_0 = b,s_1,\ldots,s_{r-1},s_r = e$ be 
	\zi		\qquad the start indices of runs in $A[b..e)$
	\li		Let $P_1,\ldots,P_{r-1}$ be the boundary powers
	\li		$\{i_1,\ldots,i_m\}$ = $\arg \min \{ P_i : i\in[1..r)\}$
	\li		$i_0$ = $s_0$; $i_{m+1}$ = $s_r$
	\li		\For $j=0,\ldots,m$ 
	\Do
		\li		$\proc{KwayTree}_k(A[i_{j}..i_{j+1}))$ \Comment recurse
	\EndFor
	\li		$\proc{Merge}(A[i_0..i_1),\ldots,A[i_m..i_{m+1}))$
	\end{codebox}	
\end{minipage}
\caption{Conceptual algorithm to compute the merge tree of $k$-way Powersort.}
\label{alg:merge-tree}
\end{algorithm}

This is just a conceptual algorithm;
Multiway Powersort will produce the same merge tree as $\proc{KwayTree}_k(A[0..n))$, 
but without having to search for minimal powers explicitly or even knowing all runs and powers 
at any point in time (as described below).
$\proc{KwayTree}_k$ makes it obvious that $P_j$ is the ``intended'' depth of 
$m(\runboundary j)$
since we combine nodes recursively by increasing power. 
Nodes can end up \emph{higher} in the merge tree if not all power values occur.
Since we count edges for the depth, but powers start at 1, this yields following useful inequality.

\begin{fact}\label{fact:boundary-power}
	$\mathrm{depth}(m(\runboundary j)) \le P_j - 1$.
\end{fact}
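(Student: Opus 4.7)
The plan is to track how the recursion depth in \proc{KwayTree}$_k$ relates to the powers available in each subproblem, and to conclude that a boundary's associated merge-tree node cannot sit deeper than $P_j - 1$. The merge tree $T$ is exactly the tree of recursive calls of \proc{KwayTree}$_k(A[0..n))$: each call creates one merge-tree node (by a single $\proc{Merge}$ at the end), so the depth of a node in $T$ equals the recursion depth of the call that produced it. I would phrase the whole argument around this identification and then prove one clean invariant by induction.

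\textbf{Key invariant.} In every recursive call of \proc{KwayTree}$_k$ whose resulting merge-tree node sits at depth $d$ in $T$, every boundary $\runboundary j$ contained in the current subarray $A[b..e)$ satisfies $P_j \ge d+1$. I would prove this by induction on $d$. For the base case $d=0$, the claim is immediate from \wref{def:node-power}, because $P_j\in\N$ and $P_j\ge 1$ for every boundary (as $a_i<b_i$ forces the minimum $p$ in the definition to be at least $1$). For the inductive step, suppose the invariant holds at depth $d$, and consider a call at depth $d$ processing $A[b..e)$ whose minimum-power boundaries attain some value $p^*$; by the induction hypothesis $p^*\ge d+1$. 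The splitting points used in line 3 are \emph{all} boundaries of power $p^*$ in the subarray, so any boundary $\runboundary j$ lying inside one of the sub-ranges $A[i_j..i_{j+1})$ that we recurse on must have strictly larger power, i.e.\ $P_j \ge p^* + 1 \ge d+2$. Since the recursive call for this sub-range creates a node at depth $d+1$ in $T$, the invariant is preserved.

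\textbf{Finishing.} The node $m(\runboundary j)$ is, by construction of the conceptual algorithm, the merge-tree node produced by the unique recursive call that uses $\runboundary j$ as one of its splitters. In that call, $\runboundary j$ attains the minimum power of the subarray, so by the invariant applied at the depth $d = \mathrm{depth}(m(\runboundary j))$ of that call we obtain $P_j \ge d+1$, i.e.\ $\mathrm{depth}(m(\runboundary j)) \le P_j - 1$, as claimed.

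\textbf{Main obstacle.} The substantive step is the induction, and within it the observation that on each level of recursion the minimum power strictly increases. This is really a statement about the interplay between Definition~\ref{def:node-power} (which fixes $P_j$ globally from the original run lengths) and the recursion scheme of \proc{KwayTree}$_k$ (which partitions the problem using the current minimum). Once one is comfortable that the $P_j$ are global invariants of the input and that line~3 removes \emph{all} boundaries achieving the current minimum, the strict-increase observation is immediate. There is no case analysis around the special upper nodes that can arise when some power values fail to occur: such nodes only make $\mathrm{depth}(m(\runboundary j))$ \emph{smaller}, which only strengthens the inequality, so the bound \wref{fact:boundary-power} continues to hold.
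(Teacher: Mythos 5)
Your proof is correct and takes essentially the same route as the paper, which justifies this fact only by the informal remark that $\proc{KwayTree}_k$ combines nodes recursively by increasing power, so that depth (counted in edges, starting at 0) lags behind the power (starting at 1), with skipped power values only pushing nodes higher. Your induction on recursion depth—showing every boundary surviving into a depth-$d$ call has power at least $d+1$ because all minimum-power boundaries are consumed as splitters at each level—is a faithful formalization of exactly that observation.
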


A simple proof by induction shows that in each recursive call of $\proc{KwayTree}_k(A[0..n))$,
we have $m \le k-1$, so we are merging at most $k$ runs at once.

\subsection{k-way Powersort}

Like Timsort, Powersort processes the merge tree left to right.
We maintain a stack~$S$ of runs and powers, subject to the invariant
that powers are weakly increasing in $S$ (from bottom to top).
When the next run boundary \runboundary i has a power $P_i$ larger or equal than $S.\id{top}()$, 
we push $(\leafnode{i-1},P_i)$ onto $S$ and continue.
Otherwise, we merge \leafnode i with \emph{all runs} on $S$ with power \emph{equal} to $S.\id{top}()$;
this is a key difference to standard Powersort, where we always just merge with $S.\id{top}()$.
We repeatedly perform such merges until we can push $(\leafnode{i-1},P_i)$ onto $S$ 
without violating the invariant.

The definition of $k$-way powers ensures that there are at most $k-1$ equal powers on $S$ at any time,
so all merges executed in the above process combine at most $k$ runs.
Once all run boundaries have been treated that way, the remaining stack is merged top to bottom (similar to Timsort)

\begin{algorithm}[tbph]
\hspace*{1em}\begin{minipage}{0.9\linewidth}%
\small
\tikzset{every node/.style={font=\scriptsize}}
	\begin{codebox}
		\Procname{$\proc{KwayPowerSort}_k(A[0..n))$}
		\li $S \gets $ empty stack  \quad \Comment\textsmaller{capacity $(k-1)\lceil\log_k(n)+1\rceil$}
		\li $b_1 \gets 0$; \; 
			$e_1 = \proc{FirstRunOf}(A[b_1..n))$
		\li \While $e_1 < n$
		\Do
			\li $b_2 \gets e_1$; \;
				$e_2 \gets \proc{FirstRunOf}(A[b_2..n))$
			\li $P \gets \proc{Power}_k(n,b_1,e_1,b_2,e_2)$ 
			\li \While $S.top().power > P$ %
			\Do
				\li $P' \gets S.top().power$
				\li $L \gets$ empty list;\;   $L.\id{append}(S.\id{pop}())$ %
				\li \While $S.\id{top}().power \isequal P'$
				\Do
					\li $L.\id{append}(S.\id{pop}())$
				\EndWhile
					\zi \Comment merge runs in $L$ with $A[b_1..e_1)$%
				\li $(b_1,e_1) \gets \proc{Merge}(L, \, A[b_1..e_1))$
			\EndWhile
			\li $S.\id{push}(A[b_1,e_1), P)$
			\li $b_1 \gets b_2$; \; $e_1 \gets e_2$
		\EndWhile
		\Comment Now $A[b_1..e_1)$ is the rightmost run
		\li \While $\neg S.\id{empty}()$
		\Do
			\zi \Comment pop (up to) $k-1$ runs, merge with $A[b_1..e_1)$
			\li $(b_1,e_1) \gets \proc{Merge}(S.\id{pop}(k-1), \, A[b_1..e_1))$
		\EndWhile
	\end{codebox}
	\vspace{2ex}
	\begin{codebox}
		\Procname{$\proc{Power}_k(n,b_1,e_1,b_2,e_2)$}
		\li $n_1 \gets e_1 - b_1$; \; 
			$n_2 \gets e_2 - b_2$; \; $p \gets 0$
		\li $a \gets ( b_1 + n_1/2 ) / n$; \;
			$b \gets ( b_2 + n_2/2 ) / n$
		\li \kw{while} $\lfloor a\cdot k^p \rfloor \isequal \lfloor b\cdot k^p \rfloor$ 
			\kw{do} $p\gets p+1$ \kw{end while}
		\li \Return $p$
	\end{codebox}
	\end{minipage}
	\caption{%
		Multiway Powersort pseudocode.
		The function \proc{FirstRunOf} finds the leftmost run in an array and returns its endpoint.
	}
	\label{alg:powersort}
	\vspace*{-3ex}
\end{algorithm}%

\subsection{Analysis}

We state a few properties about Multiway Powersort, most importantly a bound on its merge cost.
The case $k=2$ has been covered in~\cite{MunroWild2018}, but the following self-contained proof
is arguably more direct.

\begin{theorem}
\label{thm:kway-powersort-analysis}
	The merge cost of \proc{$k$-way PowerSort} is  
	$M \le \frac1{\lg(k)}\mathcal H(\frac{L_0}n,\ldots,\frac{L_{r-1}}n)n+2n$ and
	the number of comparisons $C \le \frac{\lceil \lg k\rceil}{\lg k} \mathcal H(\frac{L_0}n,\ldots,\frac{L_{r-1}}n)n+(1+2\frac{\lceil \lg k\rceil}{\lg k})n+(k-1)r$.
\end{theorem}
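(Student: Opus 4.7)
My plan is to bound the depth $d_i$ of each leaf $R_i$ in the merge tree $T$ produced by $k$-way Powersort, apply the identity $M=\sum_i d_i L_i$ to obtain the merge-cost bound, and then convert this to a comparison bound using the $k$-way tournament-tree cost recalled in \wref{sec:preliminaries}.

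First I bound the run-boundary powers. From \wref{def:node-power}, $b_j-a_j=(L_{j-1}+L_j)/(2n)$; whenever $k^p(b_j-a_j)\ge 1$, the interval $(a_j k^p, b_j k^p]$ contains an integer, forcing $\lfloor a_j k^p\rfloor<\lfloor b_j k^p\rfloor$. Taking the smallest such $p$ gives
\[
  P^{(k)}_j\;\le\;\bigl\lceil\log_k\bigl(2n/(L_{j-1}+L_j)\bigr)\bigr\rceil .
\]
Next I bound $d_i$: with the sentinel convention $P_0=P_r=0$, I claim $d_i\le\max(P_i,P_{i+1})$. The direct parent $v$ of leaf $R_i$ in $T$ must be assigned an adjacent boundary (the merge at $v$ separates $R_i$ from at least one neighbouring child), so $v=m(B_j)$ for some $j\in\{i,i+1\}$. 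If $m(B_i)$ and $m(B_{i+1})$ both exist and differ, both have $R_i$ in their subtree, so one is an ancestor of the other; the \emph{deeper} of the two is $R_i$'s direct parent. Applying \wref{fact:boundary-power} to each then gives $\mathrm{depth}(v)\le\max(P_i,P_{i+1})-1$, hence $d_i\le\max(P_i,P_{i+1})$. Combining with the previous display and $L_i\le L_{i-1}+L_i$, $L_i\le L_i+L_{i+1}$ yields $d_i\le\lceil\log_k(2n/L_i)\rceil\le\log_k(n/L_i)+1+1/\lg k$.

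Multiplying by $L_i$ and summing,
\[
  M\;\le\;\tfrac{1}{\lg k}\sum_{i=0}^{r-1}L_i\lg\tfrac{n}{L_i}+\bigl(1+\tfrac{1}{\lg k}\bigr)n\;=\;\tfrac{\mathcal H\,n}{\lg k}+\bigl(1+\tfrac{1}{\lg k}\bigr)n\;\le\;\tfrac{\mathcal H\,n}{\lg k}+2n
\]
for $k\ge 2$, establishing the merge-cost bound. For comparisons, an initial left-to-right scan detects all runs using at most $n-1$ comparisons, and each $d_v$-way merge costs at most $\lceil\lg d_v\rceil M(v)+(d_v-1)\le\lceil\lg k\rceil M(v)+(d_v-1)$ via a tournament tree. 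Summing over the inner nodes of $T$ and using the tree identity $\sum_v(d_v-1)=r-1$ (for a tree with $r$ leaves) gives $C\le n-1+\lceil\lg k\rceil M+(r-1)$; substituting the $M$-bound and $r-2\le(k-1)r$ yields the stated inequality.

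The main obstacle is the structural claim in the second step: a short verification, following the recursive construction of \wref{alg:merge-tree}, that $R_i$'s direct merge-tree parent is one of $m(B_i),m(B_{i+1})$ and, when these two differ, is the deeper of the two. One can see this by tracing the $R_i$-side descendant chain of the higher ancestor and checking that its interval is delimited exactly by $B_i$ and $B_{i+1}$. Everything else is routine logarithm and entropy bookkeeping.
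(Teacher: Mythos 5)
Your proof is essentially the paper's own argument: the same decomposition $M=\sum_i d_i L_i$, the same identification of $R_i$'s direct parent with the deeper of $m(\runboundary i)$ and $m(\runboundary{i+1})$ combined with the bound $\mathrm{depth}(m(\runboundary j))\le P_j-1$, the same midpoint-interval estimate on the powers (you use the full interval length $(\ell_{i-1}+\ell_i)/2$ where the paper lower-bounds it by $\ell_i/2$, which changes nothing), and the same tournament-tree accounting for comparisons (your $\sum_v(d_v-1)=r-1$ versus the paper's $\mu\le r-1$ is an immaterial bookkeeping difference). The only caveat is that literally substituting $M\le \mathcal H n/\lg k+2n$ into $\lceil\lg k\rceil M$ produces a $2\lceil\lg k\rceil n$ term rather than the stated $2\tfrac{\lceil\lg k\rceil}{\lg k}n$, but the paper's final line makes exactly the same substitution, so your write-up matches the published proof step for step.
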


\begin{proof}
We actually show $M/n \le \sum_{i=0}^{r-1} \ell_i \lceil \log_k(1/\ell_i) + 1 \rceil$, which implies the claim.
We start with the expression from \wref{sec:run-adaptive-mergesort} for the merge cost: 
$M=\sum_{i=0}^{r-1} d_i \cdot L_i$ where $d_i$ is the depth of leaf \leafnode i in~$T$.
Since \leafnode i is a leaf, it must be the (direct) child of either $m(\runboundary i)$ or $m(\runboundary{i+1})$ in $T$ and hence
$d_i = 1 + \max\{\mathrm{depth}(m(\runboundary i)), \mathrm{depth}(m(\runboundary{i+1}))\}$;
by formally setting $\mathrm{depth}(m(\runboundary 0)) = \mathrm{depth}(m(\runboundary r)) = 0$, 
this equality holds for all $i=0,\ldots,r-1$.
Now, \wref{fact:boundary-power} yields $d_i \le \hat P_i$, $i=0,\ldots,r-1$, where 
$\hat P_i \colonequals \max\{P_i, P_{i+1}\}$ and we similarly set $P_0 = P_r = 0$.
So we have 
$M/n \le \sum_{i=0}^{r-1} \ell_i \hat P_i$.

We will now show that $k^{-p} \le \ell_i / 2$ implies $P^{(k)}_i \le p$.
It is easy to see that with $a_i$ and $b_i$ as in \wref{def:node-power}, we have $\smash{P^{(k)}_i} = \min\{p : \exists c\in\N:c\cdot k^{-p}\in(a_i,b_i]\}$; (cf.\ \wref{fig:power}).
It suffices to note that $|(a_i,b_i]| \ge \ell_i/2$ and whenever $k^{-p} \le |(a_i,b_i]|$ is given,
$(a_i,b_i]$ clearly contains a value $c k^{-p}$ (an orange ``tick'' in \wref{fig:power}).
Note that since $|(a_{i+1},b_{i+1}]| \ge \ell_{i}/2$ similarly holds, the above argument also shows that 
$k^{-p} \le \ell_{i} / 2$ implies $P^{(k)}_{i+1} \le p$.

Now, $k^{-p} \le \ell_i / 2$ iff $p \ge \log_k(1/\ell_i)+1$, so setting $p = \lceil\log_k(1/\ell_i) + 1\rceil$
implies that condition and hence $P^{(k)}_i \le \lceil\log_k(1/\ell_i)\rceil + 1$ as well as $P^{(k)}_{i+1} \le \lceil\log_k(1/\ell_i)\rceil + 1$;
hence $\hat P_i \le \lceil\log_k(1/\ell_i)\rceil + 1$.
Inserting shows
$M/n \le \sum_{i=0}^{r-1} \ell_i \lceil\log_k(1/\ell_i)+1\rceil$ as claimed.

For the number of comparisons, recall that the tournament-tree merge uses $\le \lceil\lg(k)\rceil$ comparisons per output element, except for the first output element of each merge, where we initialize the tournament using $k-1$ comparisons. 
Let $\mu$ be the number of merges; the exact number depends on the merge policy, but $\lceil \frac{r-1}{k-1}\rceil \le \mu \le r-1$.
Including the $n-1$ comparisons to find runs, 
we overall spend 
$
		C
	\le 
		M\cdot \lceil \lg k\rceil + \mu\cdot (k-1-\lceil \lg k\rceil) + n-1
$.
Inserting the bound for $M$, we obtain the claimed
$
		C
	\le 
		\frac{\lceil \lg k\rceil}{\lg k} \cdot (\mathcal H n + 2n) + (k-1)r + n
$ 
comparisons.
\end{proof}

\begin{remark}
\begin{enumerate}[label=(\alph*)]
\item 
	We note that the analysis of $C$ could be slightly sharpened, but we are mostly interested in the case where $k$ is a small power of two; there, the above bound is tight up to lower order terms.
\item 
	For $k=2$, the bounds simplify to Thm.~5 of~\cite{MunroWild2018},
	for which we above give an alternative proof that 
	is entirely self-contained and elementary.
\end{enumerate}
\end{remark}

\begin{proposition}
	The maximal stack height for $k$-way Powersort
	is $(k-1)\lceil\log_k(n)+1\rceil$.
\end{proposition}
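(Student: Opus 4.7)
The plan is to decompose the stack height into the product of (the number of distinct power values that can appear) and (the maximum number of stack entries sharing a single power), and show these are bounded by $\lceil\log_k(n)+1\rceil$ and $k-1$, respectively.

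For the first factor, I would reuse the derivation already inside the proof of \wref{thm:kway-powersort-analysis}: there we showed $P_i^{(k)} \le \lceil\log_k(1/\ell_i)\rceil + 1$, and since every run has length at least one, $\ell_i \ge 1/n$ yields $P_i \le \lceil\log_k(n)\rceil + 1 = \lceil\log_k(n)+1\rceil$. As powers are positive integers, this confines them to a set of size at most $\lceil\log_k(n)+1\rceil$.

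For the second factor, I would establish the invariant that at most $k-1$ stack entries can share any fixed power $p$. Since powers are weakly increasing along the stack, such entries form a contiguous block and correspond to input-order boundaries $B_{j_1}<\cdots<B_{j_k}$. Each $B_{j_s}$'s midpoint interval contains a tick $c_s/k^p$ with $c_s$ not divisible by $k$ (else the power would already be $\le p-1$). For $p=1$ there are only $k-1$ such ticks in $(0,1)$, so the bound is automatic; for $p\ge 2$, I would use that any $k$ integers not divisible by $k$ straddle at least two ``blocks'' $\{km,\ldots,km+k-1\}$ (each holding only $k-1$ non-multiples), giving a multiple $km'$ strictly between $c_1$ and $c_k$, whence $m'/k^{p-1}$ is a level-$(p-1)$ tick strictly between the two extreme midpoint ticks. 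Because the midpoint intervals tile $(\ell_0/2,\,1-\ell_{r-1}/2]$, this deeper tick lies in some $(a_{i'},b_{i'}]$ with $j_1\le i'\le j_k$; the cases $i'=j_1$ and $i'=j_k$ are excluded since either would place a level-$(p-1)$ tick inside the corresponding midpoint interval, forcing that power to be $\le p-1$ rather than $p$. Hence $j_1 < i' < j_k$ and $P_{i'}\le p-1$.

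To close the contradiction I would invoke the algorithmic dynamics: $B_{i'}$ is processed (in time) strictly between $B_{j_1}$ and $B_{j_k}$, at which moment $B_{j_1}$'s entry is already on the stack with power $p > P_{i'}$. The merging loop triggered by $B_{i'}$ pops every stack entry with power exceeding $P_{i'}$, which would merge $B_{j_1}$'s entry away — contradicting its later presence in the $k$-equal-power configuration. The main obstacle I anticipate is the tiling argument for midpoint intervals: verifying that the deeper tick lies strictly in the interior of the tiled range and handling the half-open endpoints to exclude the extreme indices $i'=j_1,j_k$. With both bounds in place, their product gives exactly $(k-1)\lceil\log_k(n)+1\rceil$.
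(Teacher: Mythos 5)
Your proposal is correct and follows the same decomposition as the paper's own proof: at most $\lceil\log_k(n)+1\rceil$ distinct power values (via $P_i\le\lceil\log_k(1/\ell_i)\rceil+1$ and $\ell_i\ge 1/n$) times at most $k-1$ stack entries sharing any one power value, using that powers are weakly increasing on the stack. The only difference is that the paper simply asserts the ``at most $k-1$ entries of equal power'' invariant as a consequence of the definition of $k$-way powers, whereas you supply a complete argument for it (the $k$ level-$p$ ticks have numerators not divisible by $k$ and hence must straddle a level-$(p-1)$ tick, forcing an intervening boundary of strictly smaller power that would have popped the earlier entries in the merge loop); this correctly fills in a detail the paper leaves implicit.
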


\begin{proof}
	We use that $P_i \le \lceil\log_k(1/\ell_i)+1\rceil$ (see above);
	since $\ell_i \ge 1/n$, we obtain $1\le P_i\le \lceil\log_k(n)+1\rceil$.
	So there are $\lceil\log_k(n)+1\rceil$ different possible values for~$P_i$.
	The run stack $S$ is weakly increasing and can contain at most $k-1$ entries with equal power,
	so $|S|\le \lceil\log_k(n)+1\rceil(k-1)$.
\end{proof}

\begin{remark}
	We point out that Peeksort, the other algorithm from~\cite{MunroWild2018},
	can also be generalized to $k$-way merges.
    In essence, instead  of finding the (single) run boundary closest to the array midpoint, we now use $k-1$ equidistant points and pick the run boundaries closest to those as boundaries of recursive calls. There are a few corner cases that require care, though.
	We defer our corresponding results to a full version of this paper due to space constraints and since
    Powersort is the preferred method in practice.

\end{remark}

\section{Results}

This section describes our empirical results around Multiway Powersort.

\subsection{Implementations and setup}

We have implemented $2$-way and $4$-way Powersort in \Cpp and engineered both 
for performance under the GNU Compiler Collection. 
We also compare them to a few standard algorithms including \texttt{std::sort} and \texttt{std::stable\_sort} from the GNU implementation of the \Cpp Standard Library.
Key parts of the code are shown in \wref{app:code}.
The full implementations including instructions to reproduce our running-time study are available on GitHub: \url{https://github.com/sebawild/powersort}.

We comment on a few implementation details that are relevant for the running-time comparisons.

\paragraph{Tournament tree}
For 4-way merging, we use a tournament tree with 3 internal nodes.
We represent it as a winner tree, \ie, each node stores the run contributing the smaller element 
among its two children;
more specifically, each node store a pointer to the first element of a run that has not been output yet.
The root additionally stores whether the minimum there came from the left or right subtree.
That way each element moved to the output entails 2 tournaments (node recomputations), and we know which ones these are.
Since our tournament tree is small and will entirely reside in registers, directly accessing the two children of a node for playing a match is very inexpensive, 
so there seems to be little benefit from using a loser tree instead of the simpler winner tree.
Whenever an element of a run enters the tournament tree (because it has won against its sibling run),
we already advance the pointer for the leaves of the tournament tree; this decouples the advancement of the pointer from writing the output.

\paragraph{Sentinel values}
We implemented merging methods that assume the availability of a $+\infty$ value that is strictly larger
than any element in the list to be sorted.
Such values are often available; floating-point types directly support them, integer types may be restricted to not using the largest representable value; strings and characters can use the null character 
(with a symmetric implementation that uses values smaller than any value to sort).
We can place such a sentinel value at the end of runs to avoid a pointer comparison in the inner loop of the merging methods, which saves a few instructions and branches per iteration.
Below, we report on results for algorithms with and without sentinels.

\paragraph{Sentinel-free merging by stages}
When a type does not (efficiently) support a $+\infty$ value, 
it can be beneficial to have a merge method that does not use sentinels.
For 2-way Powersort, a trivial such merge implementation is only slightly slower overall than using sentinel-based 2-way merge.
For 4-way merging using tournament trees, there are many more cases to distinguish (which runs are empty),
making a trivial implementation inefficient.
We combine two tricks to improve upon that: First, we merge in ``stages'' where a stage ends whenever one run is exhausted. 
This allows us to reduce the number of cases to distinguish in the code (at the expense of longer code).
Second, if $s$ is the length of the shortest remaining (nonempty) run, we can clearly do at least $s$ iterations without checking boundaries, after which we recompute $s$.
When $s$ reaches $0$, we know that the current stage is over.
``4-way Powersort (no sentinel)'' uses merging by stages.

\paragraph{Buffer for runs}
All our merge methods have an in-place interface, \ie, they produce the merged result in the array initially occupied by the input runs (and assume those to be adjacent), but they use a linear-size buffer for efficient
merging. 
Most methods copy all runs to the buffer (which tends to give the fastest code for the merging phase),
but we also include a 2-way method that only copies the smaller run to the buffer and merges ``into the gap'' (a standard trick used, \eg, in Timsort). This method does not use sentinels.

\paragraph{Minimal run length}
All our Mergesort variants use Insertionsort on subproblems of size $\le 24$;
for Timsort-based methods like Powersort, this means that when a new run is found with fewer elements,
it is extended to 24 elements.
\subsection{Experimental Setup}
The experiments were run on a PC running Ubuntu 20.04.4 LTS. The CPU has 4 cores running 2 hyperthreads each and 16\,GB of main memory. The model of the CPU is Intel(R) Core(TM) i7-4790 CPU @ 3.60GHz. 
Each core has a private level-1 cache with 32KB for data and 32KB for instructions and a unified level-2 cache
of 256KB.
Moreover, there's a shared level-3 cache of 8MB.
All caches are organized into 64B cache lines and are 8-way associative; the level-3 cache is 16-way associative.

The experiments use \texttt{g++} from the GNU Compiler Collection 9.4.0 (Ubuntu 9.4.0-1ubuntu1~20.04.1)
with optimization flags \texttt{-Ofast -march=native -mtune=native}.
(Results did not change noticeably when using \texttt{-O3} instead.)

\subsection{Hypothesis 1: 4-way Powersort can yield significant performance improvements}

To test this hypothesis, 
we conducted a running-time study varying several dimensions to investigate the speedup obtainable from
4-way merging in Powersort.

As an indicative first scenario, \wref{fig:norm-times-int-runs} shows normalized running times for sorting a 
mildly presorted list of \texttt{int}s (as in~\cite{MunroWild2018}): across four orders of magnitude of input sizes, 4-way Powersort
is consistently around 20\% faster.
\wref{fig:dist-times-int-runs} shows that the distribution of running times is concentrated around the mean,
showing a robust difference.
\wref{fig:norm-times-int-runs} also shows that adaptive mergesort methods outperform the Quicksort-based \texttt{std::sort}.

The random-runs inputs have substantial sorted parts; in expectation $\sqrt n$ runs with an expected $\sqrt n$ elements each, albeit with substantial variation.
However, the lower bound $n \mathcal H$ approaches $\frac12 n \lg(n)$, so they are still far from fully sorted;
in particular, sorting these still has linearithmic complexity (as clearly visible in \wref{fig:norm-times-int-runs}).

\begin{figure}[btp]
	\includegraphics[width=\linewidth]{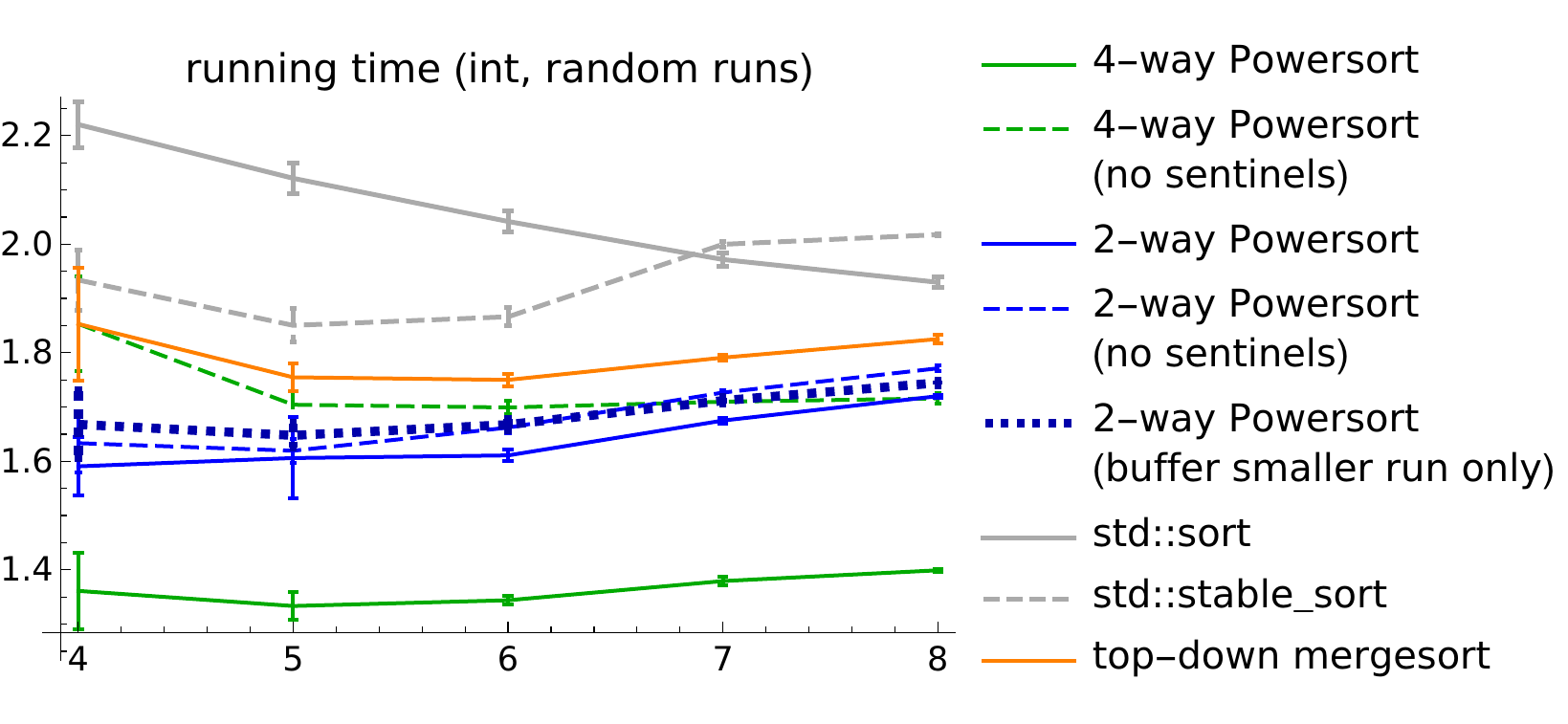}
	\caption{%
		Normalized \textbf{running times} for sorting \textbf{\texttt{int}s} (4 byte signed integers)
		where the input has \textbf{random} initial \textbf{runs} with a geometric distribution and expected length $\sqrt n$.
		The $x$-axis shows $\log_{10}(n)$, the $y$-axis shows average running time in ms multiplied by ${10^6}/{n \lg n}$; error bars show one standard deviation. We used 1000 repetitions up to $10^6$ elements and 100 repetitions for $10^7$ and $10^8$ elements.
	}
	\label{fig:norm-times-int-runs}
\end{figure}

\begin{figure}[btp]
	\includegraphics[width=\linewidth]{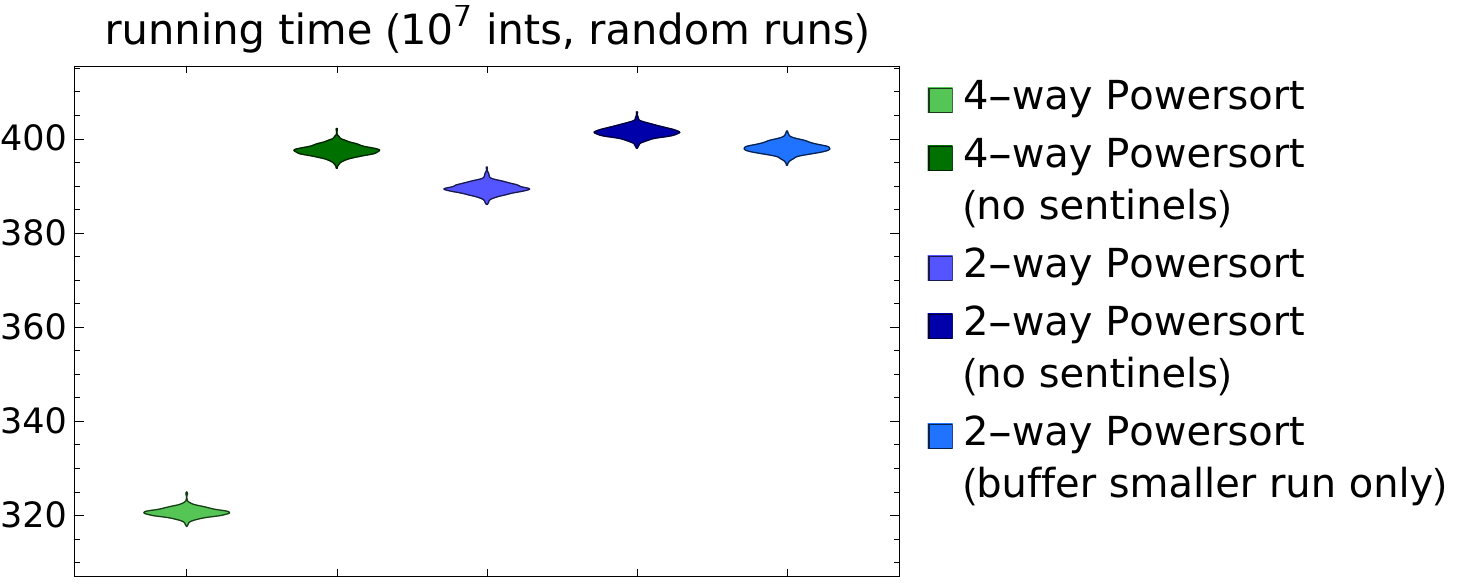}
	\caption{%
		Distribution of running times for $n=10^7$ from \wref{fig:norm-times-int-runs} (sorting \texttt{int}s with random runs).
	}
	\label{fig:dist-times-int-runs}
\end{figure}

\begin{figure}[tbp]
	\includegraphics[width=\linewidth]{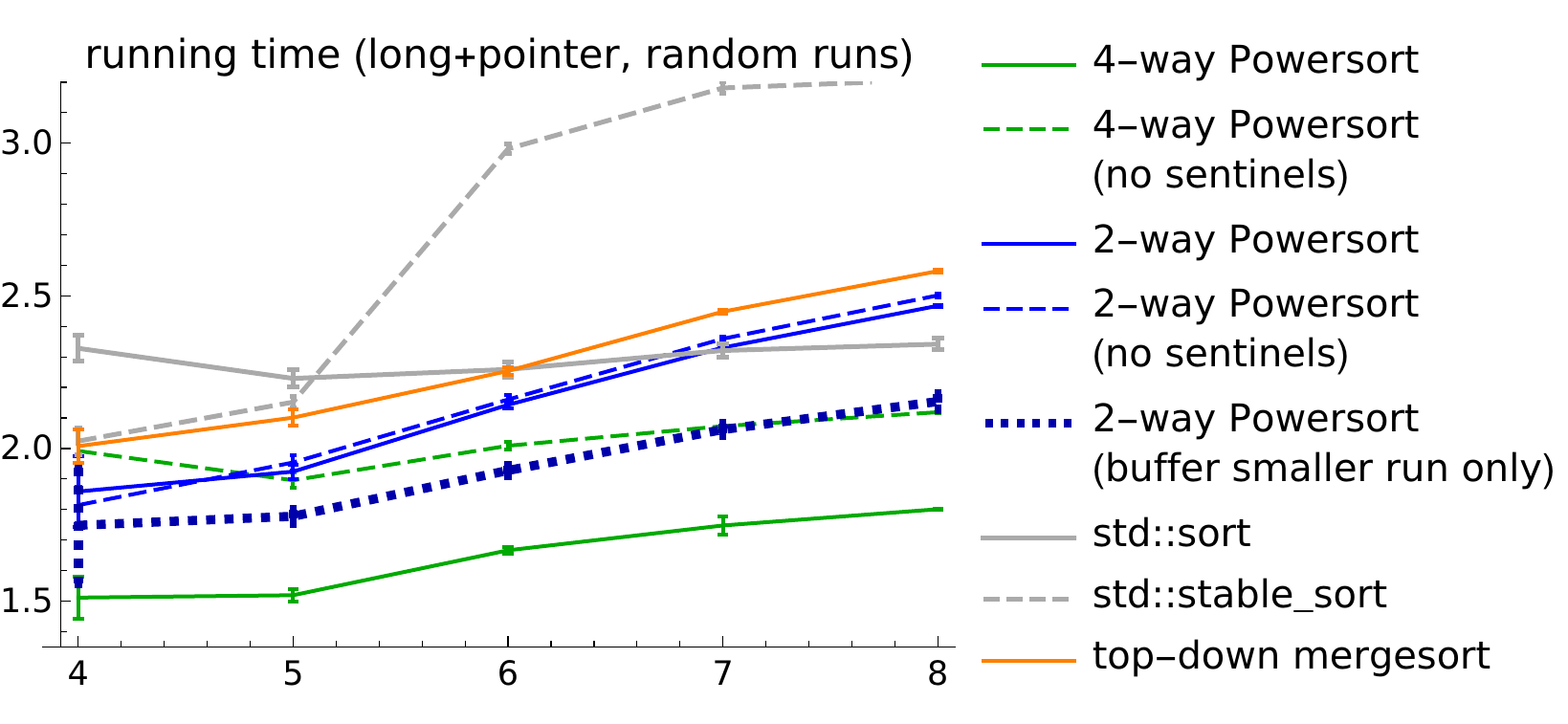}
	\caption{%
		Normalized \textbf{running times} for sorting \textbf{records} of a \texttt{long} key and a pointer (16 byte in total)
		where the input has \textbf{random} initial \textbf{runs} with a geometric distribution and expected length $\sqrt n$.
		Axes as in \wref{fig:norm-times-int-runs}.
	}
	\label{fig:norm-times-l+p-runs}
\end{figure}

A first variation considers a more realistic data type than integers:
Stable sorting is most relevant for sorting objects that have various properties / instance variables 
and which are (partially) sorted according to one of its properties.
We model this by sorting records containing a \texttt{long} key and a pointer.
\wref{fig:norm-times-l+p-runs} shows the respective running times.

Since elements are now 4 times larger (16 vs.\ 4 bytes), 
it is not surprising that all methods are slower than for \texttt{int}s, but they are not uniformly so.
4-way Powersort with the sentinel-free merging by stages was roughly as fast as 2-way Powersort, which in turn
had essentially equal performance under all three considered 2-way merge methods.
For records, the 4-way method and the 2-way method that copies only the smaller run to a buffer, substantially outperform the other 2-way Powersort variants. 
4-way Powersort with sentinel-based merge remains substantially faster still (between 15 and 20\% across the studied input-size range).

\begin{figure}[tbp]
	\includegraphics[width=\linewidth]{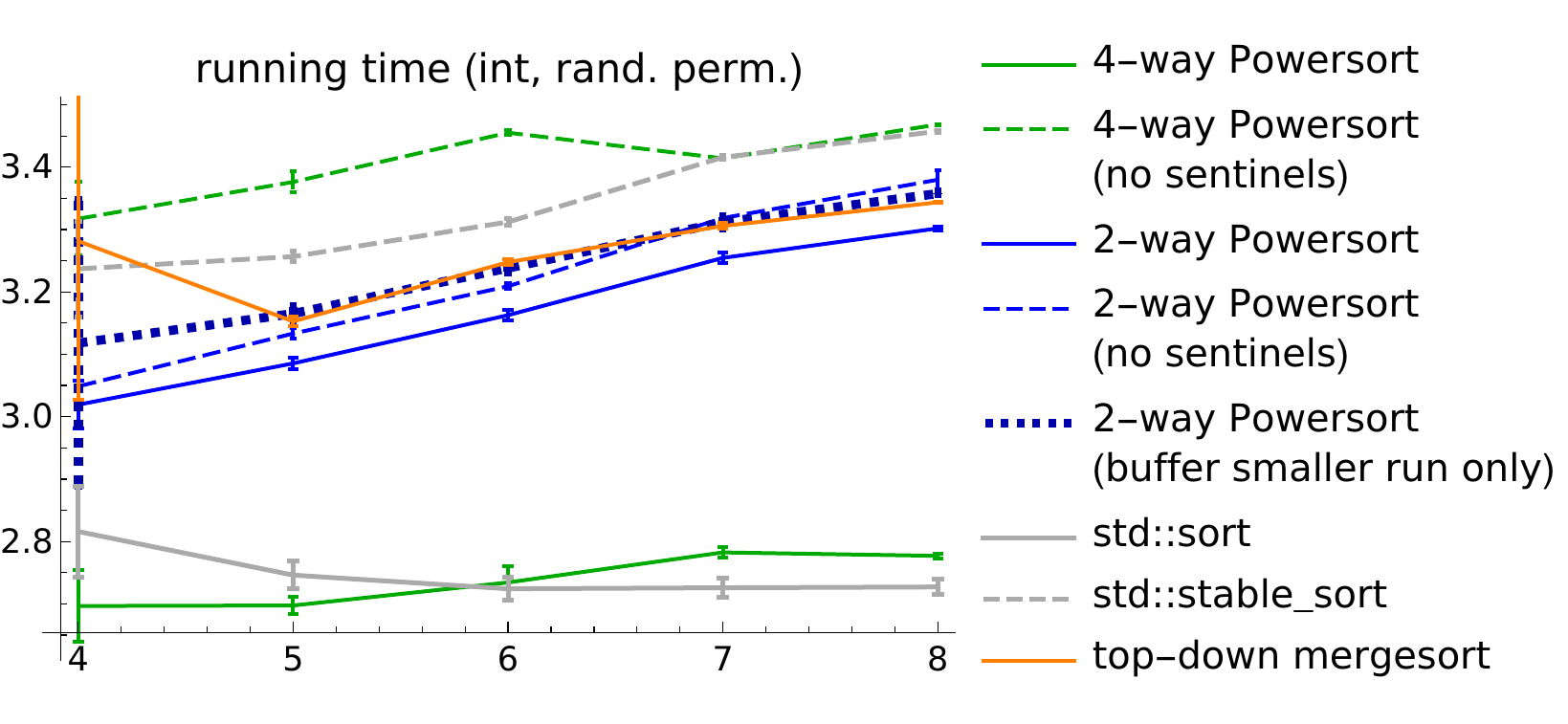}
	\caption{%
		Normalized \textbf{running times} for sorting \texttt{int}s
		where the input is a \textbf{random permutation} of $n$.
		The $x$-axis shows $\log_{10}(n)$, the $y$-axis shows average running time in ms multiplied by ${10^6}/{n \lg n}$; error bars show one standard deviation.
	}
	\label{fig:norm-times-int-rp}
\end{figure}

Finally, we vary the input distribution.
In particular, any adaptive sorting method should remain competitive with the best general purpose methods when no significant presortedness exists to be exploited.
\wref{fig:norm-times-int-rp} shows running times for random permutations.
Somewhat surprisingly, even where effectively no presortedness exists, 4-way Powersort is still 15--20\% faster than the fastest 2-way method and almost as fast as \texttt{std::sort}.

We have taken care to implement the merge methods (2-way and multiway) as efficiently as possible.
The presented methods are chosen from a much larger collection of implementation alternatives we have experimented with.
As \wref[Figures]{fig:norm-times-int-runs}--\ref{fig:norm-times-int-rp} show,
a $+\infty$ value, that can be used as a sentinel value for stopping loops 
(where otherwise a range check for the iterator would be needed),
can yield a substantial speedup in multiway merging.

\subsection{Hypothesis 2: Scanned elements explain the speedups}

\begin{figure}[tbp]
	\includegraphics[width=\linewidth]{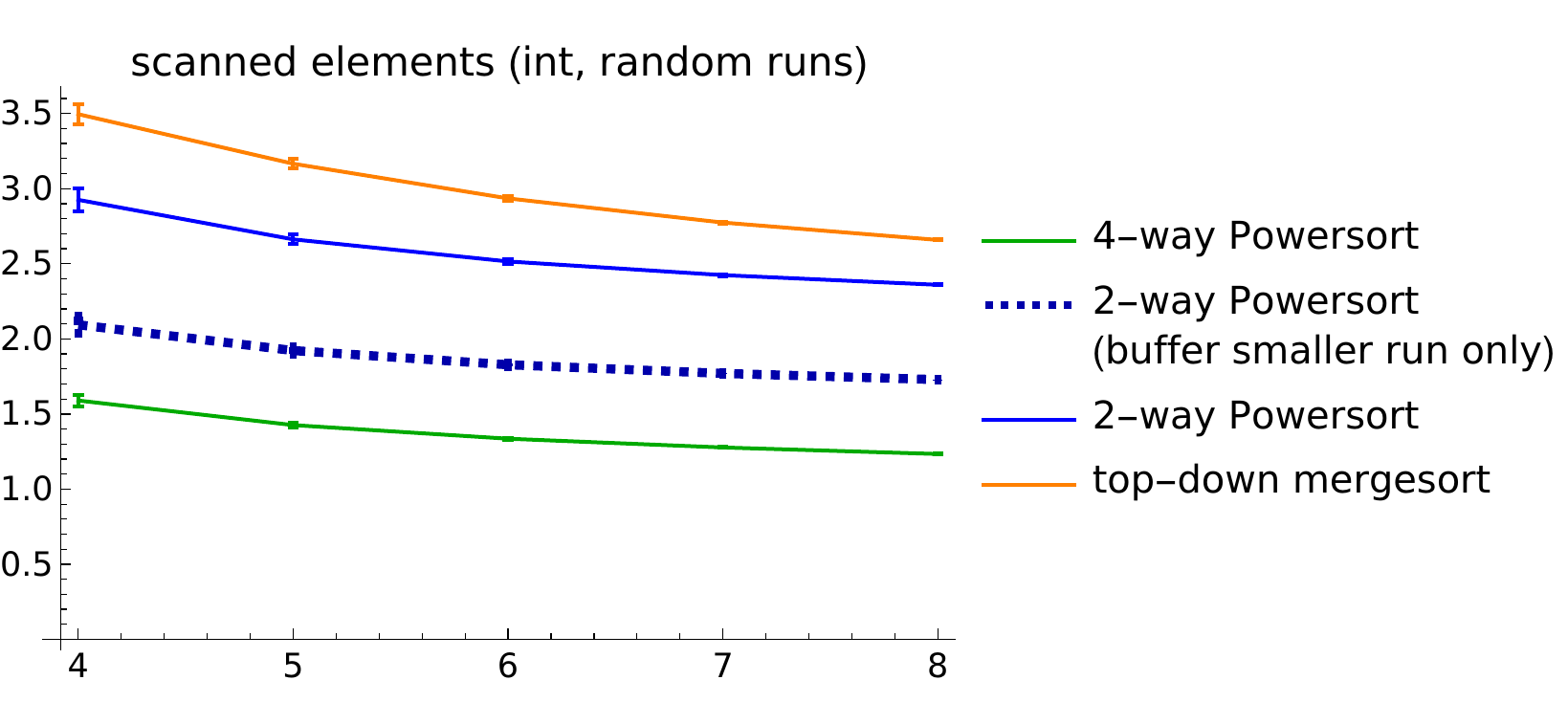}
	\caption{%
		Normalized \textbf{scanned elements} when the input has random initial runs with a geometric distribution and expected length $\sqrt n$.
		The $x$-axis shows $\log_{10}(n)$, the $y$-axis shows average merge cost divided by $n \lg(n/24)$. 
		Error bars show one standard deviation.
	}
	\label{fig:norm-scanned-elements-runs}
\end{figure}

\begin{table*}[tbp]
	\adjustbox{max width=\linewidth}{%
	\begin{tabular}{lrrrrrrr}
	\toprule
		\textbf{Algorithm} & \textbf{L1 rm} & \textbf{L1 wm} & \textbf{instructions} & \textbf{cycle estimate} & \textbf{merge cost} & \textbf{buffer cost} & \textbf{comparisons} \\
	\midrule
		nop                &             17 &      6\,250\,001 &           400\,000\,361 &           1\,087\,503\,111 &                 --- &                  --- &                  --- \\
	\midrule
		4P                 &     90\,998\,034 &     90\,941\,221 &        14\,386\,631\,421 &          26\,834\,656\,351 &         678\,233\,797 &          678\,244\,262 &        1\,401\,885\,151 \\
		4P w/o sent            &     91\,104\,862 &     90\,991\,531 &        21\,788\,074\,703 &          34\,238\,304\,273 &         678\,233\,797 &          678\,233\,797 &        1\,401\,804\,460 \\
		2P                 &    168\,472\,841 &    168\,392\,912 &        15\,593\,213\,610 &          36\,636\,357\,390 &       1\,298\,329\,585 &        1\,298\,349\,759 &        1\,398\,341\,256 \\
		2P w/o sent            &    124\,242\,042 &    118\,893\,488 &        17\,678\,030\,619 &          32\,327\,153\,489 &       1\,298\,329\,585 &          603\,947\,459 &        1\,398\,280\,296 \\
	\bottomrule
	\end{tabular}%
	}
	\caption{
		Cachegrind results for sorting one input of $n=10^8$ \texttt{int}s with random runs of expected length $\sqrt n = 10^4$.
		The table shows level 1 cache read misses (L1 rm), level 1 cache write misses (L1 wm), 
		the total number of executed instructions, and cachegrind's total-time estimate, defined as $\text{cycles} = \text{instr} + 10 \cdot \text{L1-misses} + 100\cdot \text{LL-misses}$, where LL = level 3 cache.
		The algorithms are
		``nop'' = no sorting at all, 
		``4P'' = 4-way Powersort, 
		``2P'' = 2-way Powersort,
		``4P w/o sent'' = 4-way Powersort without sentinels,
		``2P w/o sent'' = 2-way Powersort without sentinels and copying only the smaller run to the buffer.
		Here, ``nop'' is used to get a baseline; the reads come from checking that the input is correctly sorted as part of our running time setup.
		The reported numbers do not include the cost of generating the random input.
		The L1 cache lines can hold 16 elements (for our machine); read and write misses are determined by cachegrind.
		The last three columns were determined directly from the code using counters.
	}
	\label{tab:cachegrind}
\end{table*}

\begin{figure}
	\includegraphics[width=\linewidth]{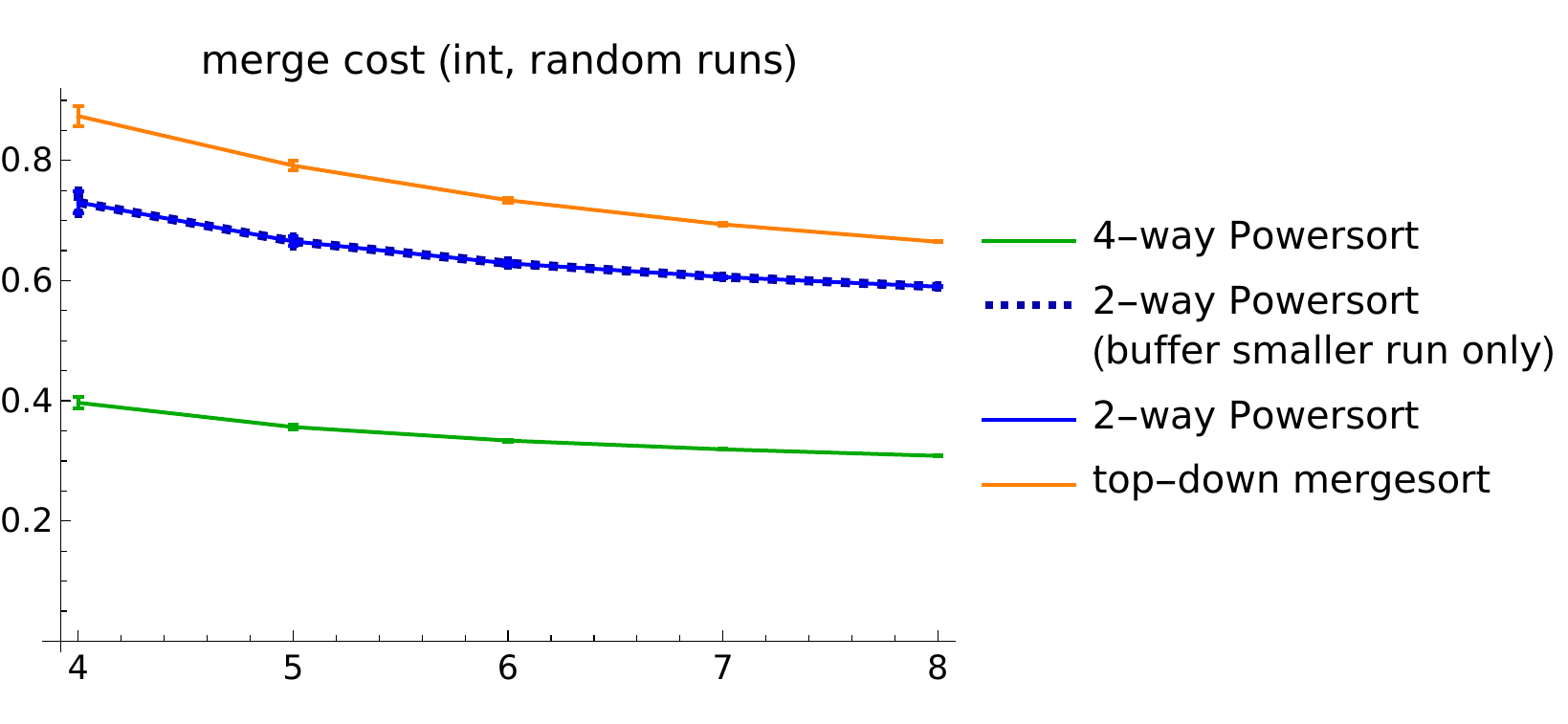}
	\caption{%
		Normalized \textbf{merge cost} when the input has random initial runs with a geometric distribution and expected length $\sqrt n$.
		The $x$-axis shows $\log_{10}(n)$, the $y$-axis shows average merge cost divided by $n \lg(n/24)$. 
		Error bars show one standard deviation.
	}
	\label{fig:norm-merge-cost-int-runs}
\end{figure}

After documenting the speedup realized by 4-way Powersort, we now explore likely explanations for these observations.
To that end, we ran our code through the cachegrind/callgrind memory hierarchy simulator for valgrind;
\wref{tab:cachegrind} shows the results.
While there is an 8\% saving in the plain number of executed instructions~-- 
owing to the extremely lean inner loops in our sentinel-based 4-way method~-- 
the much more dramatic improvement lies in the number of cache misses:
these are reduced to 54\% of the 2-way method that copies both runs to buffer resp.\ 73\% when copying the smaller only.
Our data shows that 4-way Powersort simultaneously reduces both the number of instruction and the number of cache misses.

We point out that all memory accesses in our sorting methods happen in sequential scans, 
so the memory access patterns are already as cache- and prefetcher friendly as can be;
instead of cache misses, rather the sheer data volume transferred from memory is the issue.
Since the memory scans in merging are either read-only or write-only, the total number of scanned elements
can be computed very accurately:
On the used machine, a cache line has 64 bytes, \ie, 16 \texttt{int}s, so we obtain scanned element counts as $16\cdot(\text{L1rm}+ \text{L1wm})$ (\wref{fig:norm-scanned-elements-runs}).
As a sanity check, we can compare numbers as follows: Each merge with merge cost $m$ entails $2m$ read and $2m$ writes, moreover, we do (in total) one scan over the input for run detection ($n$ reads) and initialize the buffer ($n$ writes); this yields 100.034\% of the reported L1 read misses and 100.097\% of L1 write misses for 4P (slightly more since a few reads are cached).

We use cachegrind's simple linear-combination model as a first-order approximation for explaining running times from scanned elements and instruction counts; see column cycle estimate in~\wref{tab:cachegrind}.
Comparing \wref{fig:norm-times-int-runs} (at $n=10^8$) to the cycle estimate, we see that despite its simplicity, this model is able to correctly identify 4P as the most efficient variant and it even gives
a usable approximation of the relative improvement over the 2-way methods.
While it does not rank the remaining algorithms (whose running time is very similar) correctly,
we conclude that the large reduction in scanned elements is the most likely reason for the speedup in 4-way Powersort. 
However, reducing scanned elements at the expense of a substantial increase in executed instructions is not fruitful.

\subsection{Hypothesis 3: 4-way Powersort halves merge cost}

Last, we tie the reduction in scanned elements to the key algorithmic innovation.
\wref{fig:norm-merge-cost-int-runs} shows the merge cost over different input sizes.

We note that for worst-case inputs and as $\mathcal H \to \infty$, 
\wref{thm:kway-powersort-analysis} shows 
that the merge cost of 4-way Powersort is asymptotically half of that for 2-way Powersort.
Whether that is possible to realize for practical values of $n$ is not clear a priori, though.
As \wref{fig:norm-merge-cost-int-runs} shows, 4-way Powersort reduces the merge cost to around 52\% of 
that of Powersort, very close to halving costs.

\section{Conclusions}
Powersort, introduced in 2018 by Munro \& Wild, is a stable (binary) Mergesort variant that exploits existing runs and finds nearly-optimal merging orders with negligible overhead. 
In this paper we presented Multiway Powersort, a generalization of Powersort, that uses $k$-way merges in order to speedup the algorithm. 
While the extension of classic 2-way Mergesort to $k$-way merges is straightforward, 
doing the same with optimal run-adaptive performance is an algorithmic challenge.
In solving that, we also generalize the underlying algorithm for nearly-optimal \emph{binary} search trees to nearly-optimal $k$-way search trees, which might be of independent interest.

The motivation for our generalization are expected advantages of multiway merges for the memory hierarchy of modern computers. We conducted extensive experiments based on our highly engineered implementation of 4-way Powersort around following hypotheses: Compared to 2-way Powersort
\begin{enumerate}
	\item 4-way Powersort can yield significant performance improvements;
	\item Scanned elements are a good predictor for this speedup;
	\item 4-way Powersort halves the merge cost.
\end{enumerate}
Our experiments fully confirmed the first hypothesis. 
Considering presorted integer inputs, 4-way Powersort was about 20\% faster than the 2-way variant. 
This speedup was robust in terms of the size of the data items to be sorted (record data instead of integers) as well as the amount of presortedness. 
However, significant speedups are only realized when the data type allows sentinels to be used. If these are not available (because there is no largest possible value), the running time advantages are made up for by additional range checks. 

Regarding the second hypothesis, we find that compared to different versions of 2-way Powersort, 4-way Powersort incurs between 54\% and 73\% of the cache misses, and scanned-element counts very accurately reflect that. 
This is a plausible explanation for the observed speedup.
A simple linear combination of the number of executed instructions and scanned elements explains much (but not all) of performance differences.

Last but not least, we find that the reduced flexibility of doing 4-way merges instead of 2-way merges hardly affects the effectiveness of Powersort's nearly-optimal merge policy: The merge cost is reduced to 52\% for moderate-sized inputs, close to the theoretical optimum of $\frac12$ approached in the limit.

In conclusion, 4-way Powersort provides notable advantages over both 2-way Powersort, and other known stable sorting algorithms. Thus, Multiway Powersort is a strong contender for the fastest general, stable sorting algorithm -- especially in cases where presorted data can be expected. 
Both Mergesort variants use $\Theta(n)$ words of extra space for buffer areas; 
while binary merges can easily work with a buffer for $n/2$ elements, the same is not easily achieved with 4-way merging, where our current implementation uses space $n$ elements.

Future work will study whether similar speedups are observed when sorting pointers to objects (as always happens in CPython). Moreover, when comparisons are expensive, combining multiway merging with the galloping merge strategy could be useful.
Another interesting question is whether values of $k$ larger than 4 offer some benefit.
On one hand, a further reduction of memory transfers is likely beneficial for large inputs;
on the other hand, the merging method becomes more complicated and thus implies more overheads.
It is also conceivable that with larger $k$, the fraction of merge cost that arises
from non-full merges, \ie, merging $<k$ runs, would grow.

	\myacknowledgements
%

%
\bibliography{references}

\clearpage
\onecolumn
\appendix
\ifkoma{\addpart{Appendix}}{}

\section{C++ Code}
\label{app:code}

We provide some key parts of our \Cpp implementation
(slightly redacted for readability).

\subsection{4-way merge with sentinels} 
We first give the sentinel-based 4-way merge method; this is the method that gave the best performance.

\begin{lstlisting}[language=C++,gobble=4]
    /**
     * Merge runs [l..g1) and [g1..g2) and [g2..g3) and [g3..r) in-place into [l..r)
     * using a buffer at B of length at least r-l+4.
     */.
    template<typename Iter, typename Iter2>
    void merge_4runs(Iter l, Iter g1, Iter g2, Iter g3, Iter r, Iter2 B) {
        auto n = r - l;
        // Copy runs to B and append a sentinel value after each.
        std::copy(l,  g1, B);                 *(B + (g1 - l))     = plus_inf_sentinel();
        std::copy(g1, g2, B + (g1 - l) + 1);  *(B + (g2 - l) + 1) = plus_inf_sentinel();
        std::copy(g2, g3, B + (g2 - l) + 2);  *(B + (g3 - l) + 2) = plus_inf_sentinel();
        std::copy(g3, r,  B + (g3 - l) + 3);  *(B + (r  - l) + 3) = plus_inf_sentinel();
        Iter2 c[4];  // pointers to runs in B.
        c[0] = B, c[1] = B + (g1-l)+1, c[2] = B + (g2-l)+2, c[3] = B + (g3-l)+3;
        // initialize tournament tree
        Iter2 x, y;  std::pair<Iter2, bool> z;
        //       z                x,y,z store iterator                      
        //    /     \             from winning run;         
        //   x       y            z also whether min         
        //  / \     / \           min came from left subtree           
        // 0   1   2   3                      
        if (*c[0] <= *c[1]) x = c[0]++; else x = c[1]++;
        if (*c[2] <= *c[3]) y = c[2]++; else y = c[3]++;
        if (*x <= *y) z = {x, true}; else z = {y, false};
        
        *l++ = *(z.first); // vacate root into output
        for (auto i = 1; i < n; ++i) {
            if (z.second) { // min came from c[0] or c[1], so recompute x.
                if (*c[0] <= *c[1]) x = c[0]++; else x = c[1]++;
            } else { // min came from c[2] or c[3], so recompute y.
                if (*c[2] <= *c[3]) y = c[2]++; else y = c[3]++;
            }
            // always recompute z
            if (*x <= *y) z = {x, true}; else z = {y, false};
            *l++ = *(z.first);
        }
    }
\end{lstlisting}

\subsection{3-way merge and 2-way merge}

The method for merging 3 runs is similar, but doesn't use the right subtree;
instead \texttt{y} is simply also set to \texttt{c[2]}.
For 2-way merging many variations have been explored;
when the type supports a sentinel value, the following method has a very efficient merging loop.

\begin{lstlisting}[language=C++,gobble=4]
    void merge_2runs(Iter l, Iter m, Iter r, Iter2 B) {
        auto n1 = m-l, n2 = r-m;
        std::copy(l, m, B);          *(B+(m-l))   = plus_inf_sentinel();
        std::copy(m, r, B+(m-l+1));  *(B+(r-l)+1) = plus_inf_sentinel();
        auto c1 = B,  c2 = B + (m - l + 1),  o = l;
        while (o < r) *o++ = *c1 <= *c2 ? *c1++ : *c2++;
	}
\end{lstlisting}

\subsection{4-way Powersort}

Our implementation of 4-way Powersort uses a stack of records maintained in a fixed-size array.
As in Timsort, we extend the natural runs in the input to a minimal length \texttt{minRunLen};
setting this parameter to 1 disables that optimization.

\begin{lstlisting}[language=C++,gobble=8]
        struct run {  Iter begin; Iter end;  };
        struct run_n_power {  Iter begin; Iter end; int power = 0;  };
        struct run_begin_n_power{  Iter begin;  int power;  };
        run_begin_n_power NULL_RUN_N_POWER{};
		
        /** sorts [begin,end) */
        void powersort_4way(Iter begin, Iter end) {
            auto n = end - begin;
            auto maxStackHeight = 3*(ceil_log4(n)+1);
            auto stack = new run_n_power[maxStackHeight];
            run_begin_n_power *top = stack; // topmost valid stack element
            *top = NULL_RUN_N_POWER; // power 0 as sentinel entry
            run_begin_n_power * const endOfStack = stack + maxStackHeight;

            run_n_power runA = {begin, extend_and_reverse_run(begin, end), 0};
            // extend to minimal run length
            if (auto lenA = runA.end - runA.begin < minRunLen) {
                runA.end = std::min(end, runA.begin + minRunLen);
                insertionsort(runA.begin, runA.end, lenA); // skips first lenA iterations
            }
            while (runA.end < end) {
                run runB = {runA.end, extend_and_reverse_run(runA.end, end)};
                if (size_t lenB = runB.end - runB.begin < minRunLen) {
                    runB.end = std::min(end, runB.begin + minRunLen);
                    insertionsort(runB.begin, runB.end, lenB);
                }
                runA.power = node_power(0, n, runA.begin - begin, runB.begin - begin, runB.end - begin);
                // Invariant: powers on stack are weakly increasing from bottom to top
                while (top->power > runA.power) 
                    merge_loop(top, runA);
                *(++top) = {runA.begin, runA.power}; // push new run onto stack
                runA = {runB.begin, runB.end, 0};
            }
            merge_down(stack, top, runA);
            delete[] stack;
        }
\end{lstlisting}

To make the code more readable, the body of the main loop has been moved to a function \texttt{merge\_loop}; \texttt{g++} inlined the call as part of its compiler optimizations.
\texttt{merge\_loop} counts the number of (contiguous) entries on top of the stack with equal power and then merges these with \texttt{runA}.

\begin{lstlisting}[language=C++,gobble=8]
        void merge_loop(run_begin_n_power * &top_of_stack, run_n_power &runA) {
            int nRunsSamePower = 1;
            while((top_of_stack - nRunsSamePower)->power == top_of_stack->power)
                ++nRunsSamePower;
            if (nRunsSamePower == 1) { // 2way
                Iter g[] = {top_of_stack->begin};
                merge_2runs(g[0], runA.begin, runA.end, _buffer.begin());
                runA.begin = g[0];
            } else if (nRunsSamePower == 2) { // 3way
                Iter g[] = {(top_of_stack-1)->begin, top_of_stack->begin};
                merge_3runs(g[0], g[1], runA.begin, runA.end, _buffer.begin());
                runA.begin = g[0];
            } else { // 4way
                Iter g[] = {(top_of_stack-2)->begin, (top_of_stack-1)->begin, top_of_stack->begin};
                merge_4runs(g[0], g[1], g[2], runA.begin, runA.end, _buffer.begin());
                runA.begin = g[0];
            }
            top_of_stack -= nRunsSamePower; // pop runs
        }
\end{lstlisting}

The function \texttt{merge\_down} successively merges the top 4 elements on the stack;
since runs are typically exponentially increasing in size as we work our way through the stack,
it is beneficial to first bring the number of runs to $3k+1$ for a $k\in\N$ using a single 2-way or 3-way merge at the beginning. Then all remaining merges are 4-way merges.

\begin{lstlisting}[language=C++,gobble=8]
        void merge_down(run_begin_n_power *begin_of_stack, run_begin_n_power * &top_of_stack, run_n_power &runA) {
            // we have the entire stack of runs, so instead of following exactly the powersort rule, we can
            // be slightly more clever and make sure we have 4way merges all the way through except the first merge
            auto nRuns = top_of_stack - begin_of_stack + 1; // stack size + runA
            // We want 3k+1 runs, so that repeatedly merging 4 and putting the result back gives 4way merges all the way through.
            switch (nRuns % 3) {
                case 0: // merge topmost 3 runs
                    merge_3runs<mergingMethod>((top_of_stack-1)->begin, top_of_stack->begin,
                                               runA.begin, runA.end, _buffer.begin());
                    runA.begin = (top_of_stack-1)->begin;
                    top_of_stack -= 2;
                    break;
                case 2: // merge topmost 2 runs
                    merge_2runs(top_of_stack->begin, runA.begin, runA.end, _buffer.begin());
                    runA.begin = top_of_stack->begin;
                    --top_of_stack;
                    break;
                default:
                    break;
            }
            assert(((top_of_stack - begin_of_stack) % 3) == 0);
            // merge remaining stack 4way each
            while (top_of_stack > begin_of_stack) {
                merge_4runs((top_of_stack-2)->begin, (top_of_stack-1)->begin,
                                                 top_of_stack->begin, runA.begin, runA.end, _buffer.begin());
                runA.begin = (top_of_stack-2)->begin;
                top_of_stack -= 3;
            }
        }
\end{lstlisting}

Note that this merge-down strategy can lead to a slightly different merge tree compared to the ``pure Powersort'' merge tree realized by \texttt{kway\_tree}$_k$; however, it is easy to show this change can only reduce the resulting merge cost.

\subsection{Sentinel-free 2-way merge}

If $+\infty$ values are not available, we need to include boundary checks.
This changes the code to the following method.

\begin{lstlisting}[language=C++,gobble=4]
	void merge_2runs_no_sentinel(Iter l, Iter m, Iter r, Iter2 B) {
		auto n1 = m-l, n2 = r-m;
		std::copy(l,r,B);
		auto c1 = B, e1 = B + n1, c2 = e1, e2 = e1 + n2;  auto o = l;
		while (c1 < e1 && c2 < e2)  *o++ = *c1 <= *c2 ? *c1++ : *c2++;
		while (c1 < e1)  *o++ = *c1++;
		while (c2 < e2)  *o++ = *c2++;
	}
\end{lstlisting}

\noindent
For 2-way merging, we also use the following merge method that only copies the smaller run to the buffer.

\begin{lstlisting}[language=C++,gobble=4]
	void merge_2runs_copy_smaller(Iter l, Iter m, Iter r, Iter2 B) {
		auto n1 = m-l, n2 = r-m;
        if (n1 <= n2) {
            std::copy(l,m,B);
            auto c1 = B, e1 = B + n1;  auto c2 = m, e2 = r, o = l;
            while (c1 < e1 && c2 < e2)  *o++ = *c1 <= *c2 ? *c1++ : *c2++;
            while (c1 < e1) *o++ = *c1++;
        } else {
            std::copy(m,r,B);
            auto c1 = m-1, s1 = l, o = r-1;  auto c2 = B+n2-1, s2 = B;
            while (c1 >= s1 && c2 >= s2)  *o-- = *c1 <= *c2 ? *c2-- : *c1--;
            while (c2 >= s2) *o-- = *c2--;
        }
	}
\end{lstlisting}

\subsection{Sentinel-free 4-way merge (merging by stages)}

We created a 4-way merging method that does not use sentinels and works in ``stages''
(as described in the main text).
We use \Cpp templates to exploit similarities of the stages;
during compilation, these templates are unfolded and code optimization runs on the deflated code.

\begin{lstlisting}[language=C++,gobble=4]
    struct tournament_tree_node {  Iter2 it; bool fromRun0Or1;  };
    enum number_runs {  TWO = 2, THREE = 3, FOUR = 4  };
    
    template<number_runs nRuns> 
    long compute_safe(std::vector<Iter2> & c, std::vector<Iter2> & e, std::vector<long> & nn) {
        for (int i = 0; i < nRuns; ++i) nn[i] = e[i] - c[i];
        long safe = *(std::min_element(nn.begin(), nn.end()));
        return safe;
    }
    
    template<number_runs nRuns>
    void initialize_tournament_tree(std::vector<Iter2> &c, std::vector<Iter2> &e,
                                    std::array<tournament_tree_node<Iter2>, 3> &N) {
        // tourament tree:
        //      N[0]
        //    /     \
        //  N[1]    N[2]
        //  / \     / \
        // 0   1   2   3
        if (*c[0] <= *c[1]) N[1] = {c[0]++, true}; else N[1] = {c[1]++, true};
        if (nRuns == 4)
            if (*c[2] <= *c[3]) N[2] = {c[2]++, false}; else N[2] = {c[3]++, false};
        else
            N[2] = {c[2]++, false};
        N[0] = *(N[1].it) <= *(N[2].it) ? N[1] : N[2];
    }
    
    template<number_runs nRuns>
    void update_tournament_tree(std::vector<Iter2> &c, std::vector<Iter2> &e,
                                std::array<tournament_tree_node<Iter2>, 3> &N) {
        if (N[0].fromRun0Or1) {
            if (*c[0] <= *c[1]) N[1] = {c[0]++, true}; else N[1] = {c[1]++, true};
        } else { // otherwise min came from c[2] or c[3], so recompute y.
            if (nRuns == 4)
                if (*c[2] <= *c[3]) N[2] = {c[2]++, false}; else N[2] = {c[3]++, false};
            else
                N[2] = {c[2]++, false};
        }
        // always recompute z
        N[0] = *(N[1].it) <= *(N[2].it) ? N[1] : N[2];
    }
    
    template<number_runs nRuns>
    bool rollback_tournament_tree(std::vector<Iter2> &c, std::vector<Iter2> &e,
                                  std::array<tournament_tree_node<Iter2>, 3> &N,
                                  std::vector<long> &nn) {
        auto other = N[0].fromRun0Or1 ? N[2] : N[1];
        // roll back into 'its' run
        int rollbacks = 0;
        for (auto i = 0; i < nRuns; ++i)
            if (c[i] - 1 == other.it) {
                --c[i], ++nn[i], ++rollbacks; break;
            }
        int i = std::find(nn.begin(), nn.end(), 0) - nn.begin();
        if (i == nRuns) {
            // rolled back into run that got empty; nasty special case.
            // But we made progress in the root, so just continue one more round with same nRuns.
            // need to rebuild the tree for that
            initialize_tournament_tree<nRuns>(c, e, N);
            return false;
        } else {
            c.erase(c.begin() + i);
            e.erase(e.begin() + i);
            return true;
        }
    }
    
    template<number_runs nRuns>
    bool do_merge_runs(Iter & l, Iter const r, std::vector<Iter2> &c, std::vector<Iter2> &e) {
        if (nRuns == TWO) {
            // simple twoway merge
            while (c[0] < e[0] && c[1] < e[1])   *l++ = *c[0] <= *c[1] ? *c[0]++ : *c[1]++;
            while (c[0] < e[0]) *l++ = *c[0]++;
            while (c[1] < e[1]) *l++ = *c[1]++;
            return true;
        } else {
            // use tournament tree
            std::array<tournament_tree_node<Iter2>, 3> N;
            initialize_tournament_tree<nRuns>(c, e, N);
            std::vector<long> nn(nRuns); // run sizes
            while (l < r) {
                long safe = compute_safe<nRuns>(c, e, nn);
                if (safe > 0) {
                    for (; safe > 0; --safe) {
                        *l++ = *(N[0].it); // output root
                        update_tournament_tree<Iter2, nRuns>(c, e, N);
                    }
                } else {
                    // one run is exhausted; need to handle elements in the tree
                    *l++ = *(N[0].it); // easy for the root (guaranteed min)
                    // rollback other element into its run
                    if (rollback_tournament_tree<nRuns>(c, e, N, nn))
                        // occasionally, we rollback into an empty run and have to keep going; otherwise, terminate loop.
                        break;
                }
            }
            return false;
        }
    }
       
    void detect_and_remove_empty_runs(std::vector<Iter2> & c, std::vector<Iter2> & e) {
        int nRuns = c.size();  long safe;
        while (true) {
            std::vector<long> nn(nRuns);
            for (int i = 0; i < nRuns; ++i) nn[i] = e[i] - c[i];
            safe = *(std::min_element(nn.begin(), nn.end()));
            if (safe > 0) return;
            int i = std::find(nn.begin(), nn.end(), 0) - nn.begin();
            c.erase(c.begin() + i);  e.erase(e.begin() + i);  --nRuns;
        }
    }
    
    template<typename Iter, typename Iter2>
    void merge_4runs_no_sentinels(Iter l0, Iter g1, Iter g2, Iter g3, Iter r, Iter2 B) {
        Iter l = l0;  const auto n = r - l;
        std::copy(l,  g1, B);
        std::copy(g1, g2, B + (g1 - l));
        std::copy(g2, g3, B + (g2 - l));
        std::copy(g3,  r, B + (g3 - l));
        *(B+n) = *(B+n-1); // sentinel value so that accesses to endpoints don't fail
        std::vector<Iter2> c {B, B+(g1-l), B+(g2-l), B+(g3-l)     }; // current element
        std::vector<Iter2> e {   B+(g1-l), B+(g2-l), B+(g3-l), B+n}; // endpoints
        detect_and_remove_empty_runs(c, e);
        while (l < r) {
            switch (c.size()) {
                case 4:  if (do_merge_runs<FOUR> (l, r, c, e)) break;
                case 3:  if (do_merge_runs<THREE>(l, r, c, e)) break;
                case 2:  if (do_merge_runs<TWO>  (l, r, c, e)) break;
                case 1:  return;
            };
        }
    }
\end{lstlisting}

\ifdraft{
	\clearpage
	\part*{Notes-to-self}
	\printnotestoself
}{}

\end{document}